\newtheorem{Theorem}{Theorem}
\newtheorem{Definition}{Definition}
\newtheorem{Proposition}{Proposition}
\newtheorem{Lemma}{Lemma}
\newtheorem{Remark}{Remark}
\newcommand{\LL}{{\mathrm{L}}}
\newcommand{\CC}{{\mathrm{C}}}
\newcommand{\Id}{{\mathbf{1}}}
\newcommand{\dom}{{\mathrm{dom}~}}
\newcommand{\img}{{\mathrm{rng}~}}
\newcommand{\R}{\ensuremath{{\mathbb R}}}
\newcommand{\C}{\ensuremath{{\mathbb C}}}
\newcommand{\Imm}{{\mathrm{Im}~}}
\newcommand{\Ree}{{\mathrm{Re}~}}
\newcommand{\hil}{\mathcal H}
\newcommand{\ra}{\rangle}
\newcommand{\la}{\langle}
\newcommand{\wgconv}{\stackrel{\mathrm{W\Gamma}}{\longrightarrow}}
\newcommand{\sgconv}{\stackrel{\mathrm{S\Gamma}}{\longrightarrow}}
\newcommand{\wseta}{\rightharpoonup}
\newcommand{\var}{\varepsilon}
\newcommand{\too}{\rightarrow}
\begin{document}

\title{Complex $\Gamma$-convergence and magnetic Dirichlet Laplacian in bounded thin tubes}
\author{R. Bedoya, C. R. de Oliveira { \small and} A. A. Verri\\ 
\vspace{-0.6cm}
\small
\em Departamento de Matem\'{a}tica -- UFSCar, \small \it S\~{a}o Carlos, SP,
13560-970
Brazil\\ \\}
\date{\today}

\maketitle 

\begin{abstract}  
The resolvent convergence of self-adjoint operators via the technique of  $\Gamma$-convergence of quadratic forms is adapted to incorporate complex Hilbert spaces. As an  application, we find effective operators to the Dirichlet Laplacian with magnetic potentials in very thin bounded tubular regions in space  built along smooth closed curves;  relatively weak regularity is asked for the potentials,  and the convergence is in the norm resolvent sense as the cross sections of the tubes go uniformly to zero.
\end{abstract}

\section{Introduction}

Consider a family of (lower bounded) self-adjoint operators $T_\varepsilon$ ($\var > 0$) with domain $\dom T_\varepsilon$ in a complex separable Hilbert space~$\hil$, and the corresponding closed sesqui\-lin\-e\-ar forms $b_\varepsilon$. We want to study the limit~$T$ (resp.\  $b$) of $T_\varepsilon$ (resp.~$b_\varepsilon$) as~$\var \too 0$.  We can relate this study to the concept of $\Gamma$-con\-ver\-gence; but when addressing sesqui\-lin\-e\-ar forms these variational problems are usually formulated in real Hilbert spaces and the theory has been developed under this condition; see, for instance, some important  monographs in the field as~\cite{DalMaso,Braides}. However, in quantum mechanics the Hilbert spaces are usually complex, and here we have the first aim of this work, that is, to shortly explain the ideas involved in the $\Gamma$-con\-ver\-gence of quadratic forms and then describe the modifications needed to generalize the pertinent results to complex Hilbert spaces. It must be underlined that this adaptation to complex Hilbert spaces do not make things easier, but it is handy to accommodate quantum mechanics, in particular if magnetic potentials are present. As an application we study the effective operator obtained from the Dirichlet Laplacian with magnetic field restricted to closed bounded tubes in~$\R^3$ that shrinks to a smooth curve.

There are several papers on the  Dirichlet Laplacian without  magnetic fields restricted to tubes in~$\mathbb R^3$; see, for instance, \cite{BMT, FS1, FS2, CRO, AAV, OCA}. In particular, the variational technique of $\Gamma$-convergence in real Hilbert spaces  was invoked in~\cite{BMT, CRO}.  One of the first works in which a magnetic field was added to this kind of problem is the paper~\cite{Gru}, where  the author has obtained an asymptotic expansion of the eigenvalues; but special particularities about the field were imposed. In that work the problem was restricted to a sequence of bounded tubes $\Lambda_\var$  of the space that shrinks to a closed curve  of $\mathbb R^2$ as $\var \too 0$.  Recently, considering  now that the tubes $\Lambda_\var$ are unbounded,  it was proven in~\cite{DKRAY} the norm resolvent convergence  under the condition that the vector field ${\bf A}$ depends on a parameter, more precisely, it is of form $b {\bf A}$, where $b$ is a positive parameter that depends on~$\var$.   Other variation of this problem was studied in~\cite{DKNRMT}, where the authors have considered the Dirichlet Laplacian between two parallel hypersurfaces in  Euclidean space in the presence of a magnetic field. When the distance between them tends to zero, it was shown a norm resolvent convergence of the associated operators. Since we will make use of the $\Gamma$-convergence, we will be able to require  weak regularity of the magnetic potential.

Let~$S$ be a circle of length $l>0$ and $r: S \too \mathbb R^3$ a closed and simple curve of class~$\CC^3$ in~$\mathbb R^3$ parameterized by its arc length~$s$. Denote by $k(s)$ and $\tau(s)$  its curvature and torsion at the point~$r(s)$, respectively.  Let~$Q$ be a smooth open, bounded, simply connected,  and nonempty subset of $\mathbb R^2$. We build a tube $\Omega$ in~$\mathbb R^3$ by moving the region~$Q$ along~$r(s)$.  At each point the region may present an additional rotation angle  which is denoted by~$\alpha(s)$ and we suppose that its of class $\CC^2$, and  the Dirichlet condition at the boundary~$\partial \Omega$. We take a vector magnetic potential field ${\bf A} = (A_1, A_2, A_3)$, where $A_j :\Omega \too \mathbb R$, $j=1,2,3$, are real functions so that, for differentiable~${\bf A}$, ${\bf B} = \nabla\times\textbf{A}$ is the corresponding magnetic field. Consider  the family of operators 
$$(H_\var \psi)(x)  := \left[(-i \partial - {\bf A})^2 \psi\right](x) \qquad (0 < \var < 1),$$
$\dom H_\var = \hil^2(\Omega_\var) \cap \hil_0^1(\Omega_\var)$ ($\Omega_\var$ is the region obtained by moving the $\var Q$ along~$r(s)$ and, for each $0< \var < 1$, we consider ${\bf A}$ restricted to $\Omega_\var$); see around~\eqref{condA} for the regularity conditions imposed on~${\bf A}$. 

We study the sequence $H_\var$ in the limit $\var \too 0$. For this it is necessary to make some renormalization, for example, we  need to control  the transverse oscillations as $\var \too 0$. An interesting point  is that even in the presence of a vector potential we are going to control these oscillations by  subtracting $\lambda_0/\var^2$ from $H_\var$, where~$\lambda_0$ is the first (i.e., the lowest) eigenvalue of the Dirichlet Laplacian (no magnetic potential!) restricted to~$Q$. Namely,
\begin{equation}\label{crosssectioneigenvalue}
-\Delta u_0 = \lambda_0 u_0,
\quad u_0 \in \hil_0^1(Q), \quad u_0 \ge 0,
\quad \int_Q |u_0|^2 {\mathrm d}y = 1. 
\end{equation}
$u_0$ denotes the normalized eigenfunction associated with~$\lambda_0$. Recall that~$\lambda_0>0$ and  it is a simple eigenvalue.

Now, consider the one-dimensional operator 
$$(G_0w)(s):= \big(-i \partial_s - \langle{\bf A}(r(s)), T(s) \rangle \big)^2 w(s) + \left[C(Q) (\tau+\alpha')^2(s) - \frac{k^2(s)}{4} \right]w(s),$$
$\dom G_0 = \hil^2(S)$, where~$C(Q)$ is a number that  depends only on  the region~$Q$ (see~\eqref{constC}) and $T(s)$ is the tangent vector to the curve~$r$ at the position~$r(s)$; $\langle\cdot,\cdot\rangle$ denotes the usual scalar product in~$\mathbb R^3$. Our main application here says that
\begin{equation}\label{resultintroductionorm}
H_\var - \frac{\lambda_0}{\var^2}\Id  \underset{\var \too 0}{\longrightarrow} G_0
\end{equation}
in a norm-resolvent sense; see Theorem~\ref{teoremaprincipal} for a precise formulation. To prove this convergence  we  use the variational technique of  $\Gamma$-convergence of quadratic forms in complex Hilbert spaces. Thus, in Section~\ref{secIntro} we make the necessary  generalizations to complex spaces in order to combine the corresponding strong and weak $\Gamma$-convergences with the operator convergence~\eqref{resultintroductionorm}. In Section~\ref{themodel} we show some steps of the construction of the region where the problem is considered, the quadratic forms, and appropriate change of variables and renormalization. We also comment about a suitable  gauge transform related to the magnetic potential~${\bf A}$. In Section~\ref{maintheorem} we present the main results related to our application.

\section{$\Gamma$-convergence in complex Hilbert spaces}\label{secIntro} 

As already mentioned in the Introduction, we consider a family of (uniformly) lower bounded self-adjoint operators $T_\varepsilon$ ($\var > 0$) with domain $\dom T_\varepsilon$ in a complex separable Hilbert space~$(\hil,\langle\cdot,\cdot\rangle)$. We denote by~$b_\varepsilon$ the corresponding closed sesqui\-lin\-e\-ar forms, and want to study the limit~$T$ (resp.\  $b$) of $T_\varepsilon$ (resp.~$b_\varepsilon$) as $\var \too 0$. The domain of~$T$ will not be supposed to be dense
in~$\hil$ and its closure will be denoted by $\hil_0=\overline{\dom T}$ (with $\img T\subset
\hil_0$); usually this is indicated by simply saying that ``$T$ is self-adjoint in~$\hil_0$.''
As usual, the real-valued function $\zeta\mapsto b(\zeta,\zeta)$ will be simply
denoted by
$b(\zeta)$ and called the associated quadratic form with~$b(\zeta,\eta)$. It will  be assumed that $b$ is positive (or lower bounded in general) and
$b(\zeta)=\infty$ if $\zeta$ does not belong to its domain
$\dom  b$; this is important in order to guarantee that in some cases $b$ is lower semicontinuous,
which is equivalent to $b$ be the sesqui\-lin\-e\-ar form generated by a positive self-adjoint
operator~$T$, that is, 
\[ b(\zeta,\eta)=\la T^{1/2}\zeta,T^{1/2}\eta\ra, \quad \zeta,\eta\in\dom b=\dom T^{1/2};
\]see Theorem 9.3.11 in~\cite{ISTQD}. By allowing $b(\zeta)=\infty$ one has a handy way to work in the larger space~$\hil$ instead of only in~$\hil_0=\overline{\dom T}$. 

\begin{Definition}\label{GammaAltern} The sequence $f_\varepsilon:\hil\to \overline{\R}$ strongly
$\Gamma$-converges to $f$ (that is, $f_\varepsilon\sgconv f$) iff the following two conditions are
satisfied:
\begin{itemize}
\item[i)] For every $\zeta\in\hil$ and every $\zeta_\varepsilon\to\zeta$ in~$\hil$ one has
\[ f(\zeta)\le \liminf_{\varepsilon\to0} f_\varepsilon(\zeta_\varepsilon).
\]
\item[ii)] For every $\zeta\in\hil$ there exists a sequence $\zeta_\varepsilon\to\zeta$ in
$\hil$ such that
\[ f(\zeta)=\lim_{\varepsilon\to0} f_\varepsilon(\zeta_\varepsilon).
\]
\end{itemize}
\end{Definition}

\begin{Remark}\label{remark3} If instead of strong con\-ver\-gence $\zeta_\varepsilon\to\zeta$ one
considers weak con\-ver\-gence $\zeta_\varepsilon\wseta\zeta$ in Definition~\ref{GammaAltern}, then one has a characterization of $f_\varepsilon\wgconv f$, that is, $f_\var$ weakly $\Gamma$-converges to~$f$. 
\end{Remark}

Now we state, in an appropriate form, the main result relating strong resolvent con\-ver\-gence of self-adjoint operators and $\Gamma$-con\-ver\-gence of the associated  sesqui\-lin\-e\-ar forms.  
\begin{Theorem}\label{mainTheorGamma} Let $b_\varepsilon,b$ be positive (or uniformly lower bounded)
closed sesqui\-lin\-e\-ar forms in the complex Hilbert space~$\hil$, and
$T_\varepsilon,T$ the corresponding associated positive self-adjoint operators. Then the following
statements are equivalent:
\begin{itemize}
\item[i)] $b_\varepsilon\sgconv b$ and, for each $\zeta\in\hil$, $b(\zeta)\le
\liminf_{\varepsilon\to0} b_\varepsilon(\zeta_\varepsilon)$, $\forall \zeta_\varepsilon\wseta
\zeta$ in~$\hil$.
\item[ii)]$b_\varepsilon\sgconv b$ and $b_\varepsilon\wgconv b$.
\item[iii)] $b_\varepsilon+\lambda \sgconv b+\lambda$ and $b_\varepsilon+\lambda \wgconv b+\lambda$,
for some
$\lambda>0$ (and so for all $\lambda\ge0$).
\item[iv)] $T_\varepsilon$ converges to~$T$ in the strong resolvent sense in
$\hil_0=\overline{\dom T}\subset\hil$, that is, 
\[
\lim_{\varepsilon\to0} R_{-\lambda}(T_\varepsilon)\zeta = R_{-\lambda}(T)P_0\zeta,\quad
\forall\zeta\in\hil,\forall \lambda>0,
\]where $P_0$ is the orthogonal projection onto~$\hil_0$.
\end{itemize}
\end{Theorem}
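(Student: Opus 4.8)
The plan is to establish the cycle of implications i) $\Rightarrow$ ii) $\Rightarrow$ iii) $\Rightarrow$ iv) $\Rightarrow$ i), following the real-Hilbert-space blueprint (as in \cite{DalMaso,Braides}) but keeping careful track of the sesquilinear structure. The implication i) $\Rightarrow$ ii) is essentially a tautology once one notes that the $\liminf$ inequality along weakly convergent sequences is precisely condition i) of the definition of $\wgconv$, while the $\limsup$/recovery-sequence condition ii) of $\wgconv$ is inherited from $\sgconv$ (a strongly convergent recovery sequence is in particular weakly convergent). The implication ii) $\Rightarrow$ iii) is a stability statement: adding the continuous quadratic form $\lambda\|\cdot\|^2$ preserves both strong and weak $\Gamma$-convergence, because $\zeta\mapsto\lambda\|\zeta\|^2$ is strongly continuous and weakly lower semicontinuous, and these are exactly the properties needed to pass the $\liminf$ and recovery-sequence conditions through the sum; here one should remark that $\|\zeta_\var\|^2\to\|\zeta\|^2$ along strongly convergent sequences and $\|\zeta\|^2\le\liminf\|\zeta_\var\|^2$ along weakly convergent ones, which is where complex scalars enter only through $\|\cdot\|^2=\langle\cdot,\cdot\rangle$.

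The substantive analytic core is iii) $\Rightarrow$ iv). First I would reduce to the case of uniformly positive forms by the shift in iii), so that $R_{-\lambda}(T_\var)=(T_\var+\lambda)^{-1}$ is a bounded everywhere-defined operator of norm at most $1/\lambda$. Fix $\zeta\in\hil$ and write $u_\var:=R_{-\lambda}(T_\var)\zeta$, $u:=R_{-\lambda}(T)P_0\zeta$. The pair $u_\var$ is the unique minimizer of the functional $v\mapsto b_\var(v)+\lambda\|v\|^2-2\,\mathrm{Re}\,\langle \zeta,v\rangle$ over $\hil$, and likewise $u$ minimizes the $\Gamma$-limit functional $v\mapsto b(v)+\lambda\|v\|^2-2\,\mathrm{Re}\,\langle\zeta,v\rangle$ (the projection $P_0$ appears because the linear term only ``sees'' the component of $\zeta$ in $\hil_0=\overline{\dom b}$). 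The standard $\Gamma$-convergence machinery — convergence of minima and of minimizers under $\Gamma$-convergence plus equicoercivity — now applies: the forms $b_\var+\lambda\|\cdot\|^2$ are uniformly coercive, the perturbation by the strongly continuous linear term $-2\,\mathrm{Re}\,\langle\zeta,\cdot\rangle$ preserves strong $\Gamma$-convergence, and the added weak $\Gamma$-lower bound (equivalently, the weak $\liminf$ inequality, which here coincides with weak $\Gamma$-convergence by ii)) upgrades the conclusion from weak to strong convergence of the minimizers $u_\var\to u$ in~$\hil$. Concretely: weak compactness of $\{u_\var\}$ gives a weak cluster point $u^*$; the weak $\Gamma$-$\liminf$ inequality forces $u^*$ to be a minimizer of the limit functional, hence $u^*=u$ by uniqueness; and matching the limit of energies using the strong recovery sequence for $u$ (plugged into the $\sgconv$ inequality) yields $b_\var(u_\var)+\lambda\|u_\var\|^2\to b(u)+\lambda\|u\|^2$, which together with weak convergence promotes it to norm convergence $u_\var\to u$.

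Finally, iv) $\Rightarrow$ i) runs the implication backwards. From strong resolvent convergence one recovers strong $\Gamma$-convergence of the forms by the classical equivalence (e.g.\ the convergence of the associated functionals $v\mapsto b_\var(v)+\lambda\|v\|^2-2\,\mathrm{Re}\,\langle\zeta,v\rangle$, whose minima are $-\langle\zeta,u_\var\rangle$-type quantities, forces the recovery-sequence and $\liminf$ conditions); and from the same resolvent convergence one extracts the weak $\liminf$ inequality by testing against weakly convergent sequences and using weak lower semicontinuity of $b$ together with the representation $b(\zeta)=\sup_{\lambda>0}\lambda\langle \zeta-\lambda R_{-\lambda}(T)\zeta,\zeta\rangle$ (the Yosida-type approximation). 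The main obstacle I anticipate is not any single deep step but the bookkeeping forced by the possibly non-dense domain: one must consistently carry the projection $P_0$ through all four conditions, verify that $\img T_\var$ need not lie in $\hil_0$ yet the resolvents still converge to $R_{-\lambda}(T)P_0$, and check that the complex sesquilinear identities (polarization, $\mathrm{Re}\,\langle\cdot,\cdot\rangle$ replacing the real inner product in every minimization argument) do not break any inequality that was previously an equality in the real theory. These are the points where the ``adaptation to complex Hilbert spaces does not make things easier,'' as the introduction warns, and they deserve explicit, if brief, verification.
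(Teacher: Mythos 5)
Your outline reproduces, in substance, the paper's own route: the proof is the Dal Maso Theorem 13.6/Corollary 13.7 argument transplanted to complex scalars, with the key ingredient being exactly your variational characterization of $R_{-\lambda}(T_\var)\zeta$ as the unique minimizer of $v\mapsto b_\var(v)+\lambda\|v\|^2-\la\zeta,v\ra-\la v,\zeta\ra$ (the paper's Proposition~\ref{propMudancaComplex}, where $2\la\zeta,\cdot\ra$ becomes $\la\zeta,\cdot\ra+\la\cdot,\zeta\ra=2\Ree\la\zeta,\cdot\ra$), followed by convergence of minima and minimizers under combined strong/weak $\Gamma$-convergence and a supremum representation of $b$ for the converse direction. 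The only divergence is that for iv)$\Rightarrow$i) the paper relies on the affine-minorant representation of Proposition~\ref{propCharacQuadra} (the complex version of Dal Maso's Theorem~12.21) and on Theorem~\ref{GammaWeakRW} rather than on your Yosida-type formula, but these play the same role and your variant can be made to work.
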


Now we provide the necessary  modifications so that the proofs of Theorem~13.6 and Corollary~13.7 in~\cite{DalMaso} can be replicated in order to include the case of complex Hilbert spaces, and so to conclude Theorem~\ref{mainTheorGamma} above. There are two main points to regard. The first one is the replacement, in many instances, of terms of the form $2\la\eta,\cdot\ra$ in real-space functionals by $\la\eta,\cdot\ra+\la\cdot,\eta\ra$; although this substitution is quite natural, there are few nuances in the proofs (see Proposition~\ref{propMudancaComplex} ahead). Further, the proofs also help to elucidate the connection between forms, operator actions  and domains on the one hand, and minimalization of suitable functionals on the other hand; this sheds some light on  the role played by $\Gamma$-con\-ver\-gence in the con\-ver\-gence of self-adjoint operators. 

\begin{Proposition}\label{propMudancaComplex} Let $b\ge0$ be a closed sesqui\-lin\-e\-ar form in the complex Hilbert space~$\hil$, $T\ge0$ the self-adjoint operator associated with $b$ and~$P_0$ be the orthogonal projection onto $\hil_0=\overline{\dom T}\subset\hil$. Then $\zeta\in\dom T$ and $T\zeta=P_0\eta$ iff $\zeta$ is a minimum point (also called minimizer) of the functional
\[ g:\hil\to\overline\R,\qquad g(\zeta)= b(\zeta) - \la\eta,\zeta\ra-\la\zeta,\eta\ra.
\]
\end{Proposition}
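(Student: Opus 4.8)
The plan is to characterize the minimizers of $g$ by the vanishing of its first variation, exploiting that $b$ is a closed, lower-bounded, lower semicontinuous sesquilinear form, and then to identify the resulting variational identity with the statement $\zeta\in\dom T$, $T\zeta = P_0\eta$. First I would observe that $g$ is bounded below: since $b\ge 0$ and the linear part $\la\eta,\zeta\ra+\la\zeta,\eta\ra = 2\,\Ree\la\eta,\zeta\ra$ is controlled by $2\|\eta\|\,\|\zeta\|$, one has $g(\zeta)\ge -\|\eta\|^2$ (after completing the square, or more carefully using that $b$ dominates a multiple of $\|\cdot\|^2$ on $\hil_0$ only — so one should first reduce to $\hil_0$, noting $g(\zeta)=\infty$ off $\dom b\subset\hil_0$ and that on $\hil_0$ the replacement of $\eta$ by $P_0\eta$ changes neither the linear term for $\zeta\in\hil_0$ nor anything else). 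Existence of a minimizer is not actually needed for the stated iff, so I would not dwell on it; the real content is the Euler–Lagrange computation.

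Next, suppose $\zeta$ is a minimizer. For arbitrary $\xi\in\dom b$ and $t\in\R$, the function $t\mapsto g(\zeta+t\xi)$ is a real quadratic polynomial in $t$, namely $g(\zeta)+t\big[b(\zeta,\xi)+b(\xi,\zeta)-\la\eta,\xi\ra-\la\xi,\eta\ra\big]+t^2 b(\xi)$, which has a minimum at $t=0$; hence the coefficient of $t$ vanishes:
\[
b(\zeta,\xi)+b(\xi,\zeta) = \la\eta,\xi\ra+\la\xi,\eta\ra,\qquad \forall\,\xi\in\dom b.
\]
This is the real-linear statement; to get the full complex identity $b(\zeta,\xi)=\la\eta,\xi\ra$ I would then repeat the argument with $i\xi$ in place of $\xi$ (equivalently, perturb along $t\mapsto \zeta+it\xi$), which yields $-i\,b(\zeta,\xi)+i\,b(\xi,\zeta) = -i\la\eta,\xi\ra+i\la\xi,\eta\ra$, i.e. $b(\zeta,\xi)-b(\xi,\zeta)=\la\eta,\xi\ra-\la\xi,\eta\ra$. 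Adding the two identities and dividing by $2$ gives $b(\zeta,\xi)=\la\eta,\xi\ra$ for all $\xi\in\dom b$. By the first representation theorem (the characterization of $\dom T$ and $T$ via $b(\zeta,\xi)=\la T\zeta,\xi\ra$ recalled in the excerpt, Theorem 9.3.11 of \cite{ISTQD}), this is precisely the statement that $\zeta\in\dom T$ and $\la T\zeta,\xi\ra=\la\eta,\xi\ra$ for all $\xi$ in the form domain, hence (since $\img T\subset\hil_0$ and $\dom b$ is dense in $\hil_0$) $T\zeta=P_0\eta$. Conversely, if $\zeta\in\dom T$ with $T\zeta=P_0\eta$, then $b(\zeta,\xi)=\la P_0\eta,\xi\ra=\la\eta,\xi\ra$ for every $\xi\in\dom b$; expanding $g(\zeta+\xi)-g(\zeta)=b(\xi)+b(\zeta,\xi)+b(\xi,\zeta)-\la\eta,\xi\ra-\la\xi,\eta\ra=b(\xi)\ge 0$ shows $\zeta$ is a (global) minimizer.

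The step I expect to require the most care is the passage from the real-linear Euler–Lagrange identity to the complex-linear one — this is exactly the nuance flagged in the paragraph preceding the proposition, where $2\la\eta,\cdot\ra$ in the real theory is replaced by $\la\eta,\cdot\ra+\la\cdot,\eta\ra$. One must be careful that perturbing only along real multiples of $\xi$ gives merely $2\,\Ree\,b(\zeta,\xi)=2\,\Ree\la\eta,\xi\ra$, which is strictly weaker than $b(\zeta,\xi)=\la\eta,\xi\ra$; the imaginary-direction perturbation is what upgrades it. A secondary technical point is the bookkeeping with $P_0$: one should note $\dom b\subset\hil_0$, that $\la\eta,\xi\ra=\la P_0\eta,\xi\ra$ for all such $\xi$, and that the condition "$\la T\zeta,\xi\ra=\la\eta,\xi\ra$ for all $\xi\in\dom b$" pins down $T\zeta$ only as an element of $\hil_0$, which is why the right-hand side of the conclusion is $P_0\eta$ rather than $\eta$. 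Everything else is the standard completion-of-the-square manipulation and an invocation of the representation theorem already cited.
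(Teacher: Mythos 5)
Your proposal is correct and follows essentially the same route as the paper's proof: the ``converse'' direction via expanding $g(\zeta+\xi)-g(\zeta)=b(\xi)\ge0$ using the representation $b(\zeta,\xi)=\la T\zeta,\xi\ra=\la P_0\eta,\xi\ra$, and the minimizer direction via real and imaginary perturbations $\zeta+t\xi$, $\zeta+it\xi$ (the paper's choices $z=\pm t$ and $z=\pm it$) to upgrade the real-linear Euler--Lagrange identity to $b(\zeta,\xi)=\la\eta,\xi\ra$. The only step you gloss over, which the paper makes explicit, is that a minimizer satisfies $g(\zeta)\le g(0)=0$ and hence $\zeta\in\dom b$, so that the quadratic polynomial in $t$ is actually finite.
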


\begin{proof} Assume that $\zeta\in\dom T$ and $T\zeta=P_0\eta$. Note that $g(0)=0$, so that the
minimum of
$g$ is $<\infty$. If $\varrho\in\hil\setminus \dom b$, then $g(\varrho)=\infty$ so that we can assume
that
$\varrho\in\dom b$; thus $\phi=\varrho-\zeta\in\dom b\subset\hil_0$ and
\begin{eqnarray*}
 b(\varrho)&=& b(\zeta+(\varrho-\zeta))=b(\zeta)+b(\phi)+b(\zeta,\phi) + b(\phi,\zeta)
\\ &\ge& b(\zeta) + \la T\zeta,\phi\ra +\la\phi,T\zeta\ra
\\ &=& b(\zeta) +  \la P_0\eta,\phi\ra +\la\phi,P_0\eta\ra
\\ &=& b(\zeta) + \la \eta,\phi\ra +\la\phi,\eta\ra
\\ &=& b(\zeta) + \la\eta,\varrho-\zeta\ra + \la \varrho-\zeta,\eta\ra.
\end{eqnarray*} Hence
$b(\zeta)- \la\eta,\zeta\ra - \la\zeta,\eta\ra \le b(\varrho) - \la \eta,\varrho\ra  -
\la\varrho,\eta\ra,
$ which is equivalent to $g(\zeta)\le g(\varrho)$. Since $\varrho$ was arbitrary, $\zeta$ is a minimum
point of~$g$.  Suppose now that $g(\zeta)\le g(\varrho)$, for all  $\varrho\in\hil$, that is, $\zeta$
is a minimum point of~$g$. Since $g(\zeta)\le g(0)=0$, it follows that $0\le b(\zeta)\le
\la\eta,\zeta\ra +\la\zeta,\eta\ra<\infty$ and so
$\zeta\in\dom b$. The hypothesis
$g(\zeta)\le g(\varrho)$ amounts to
\[ b(\varrho) \ge b(\zeta) + \la\eta,\varrho-\zeta\ra + \la\varrho-\zeta,\eta\ra,\quad \forall
\varrho.
\] Now for $\varphi\in\dom b$ and $z\in \C$ fixed, this inequality implies
\begin{eqnarray*} b(\zeta) + |z|^2 b(\varphi) + zb(\zeta,\varphi) +\overline z b(\varphi,\zeta)&=&
b(\zeta+z\varphi) 
\\ &\ge& b(\zeta) + z\la\eta,\varphi\ra + \overline z\la\varphi,\eta\ra.
\end{eqnarray*}
 Choosing $z=t>0$ yields
 \[
 tb(\varphi) + b(\zeta,\varphi)+b(\varphi,\zeta) \ge \la \eta,\varphi\ra + \la \varphi,\eta\ra,
 \]and taking $t\to 0$ one finds 
 \[
 b(\zeta,\varphi)+b(\varphi,\zeta) \ge \la \eta,\varphi\ra + \la \varphi,\eta\ra.
 \]
 
 By considering $z=t<0$ and then $t\to0$ one gets the opposite inequality, and so the first relation
 \[
 b(\zeta,\varphi)+b(\varphi,\zeta)= \la \eta,\varphi\ra + \la \varphi,\eta\ra.
 \] By taking successively $z=it$ with $t>0$ and $t<0$, then $t\to0$ in both cases, one gets the
second relation
\[
 b(\zeta,\varphi)-b(\varphi,\zeta)= \la \eta,\varphi\ra - \la \varphi,\eta\ra.
 \]  Add these two relations to obtain
\[ b(\zeta,\varphi) = \la \eta,\varphi\ra = \la P_0\eta,\varphi\ra, \quad \forall
\varphi\in\dom b,
\]and so conclude that $\zeta\in\dom T$ and $T\zeta=P_0\eta$ (see page~101 in~\cite{ISTQD}). This finishes the proof of the proposition.
\end{proof}

The second main technical point we need to prove Theorem~\ref{mainTheorGamma} in the case of complex
Hilbert spaces is the following complex version of Proposition~11.9 in~\cite{DalMaso}. 
\begin{Proposition}\label{ComplexQF} 
Let~$\hil$ be a complex Hilbert space and
$F:\hil\to[0,\infty]$. If this functional $F$  satisfies
\begin{itemize}
\item[a)] $F(0)=0$,
\item[b)] $F(t\zeta)\le t^2F(\zeta)$, for all $\zeta\in\hil$ and all $t\ge0$,
\item[c)] $F(\zeta+\eta) + F(\zeta-\eta) \le 2F(\zeta)+2F(\eta),$ for all $\zeta,\eta\in\hil$,
\item[d)] $F(i\zeta)=F(\zeta)$, for all $ \zeta\in\hil$,
\end{itemize} 
then $F$ is a quadratic form on~$\hil$. Conversely, if $F:\hil\to[0,\infty]$ is a
quadratic form, then it satisfies a),b),c),d) and, in addition,
\begin{itemize}
\item[{\it e)}] $F(z\zeta) = |z|^2F(\zeta)$, for all $\zeta\in\hil$ and all $z\in\C$,
\item[{\it f)}] $F(\zeta+\eta) + F(\zeta-\eta)= 2F(\zeta)+2F(\eta),$ for all $ \zeta,\eta\in\hil$.
\end{itemize}
\end{Proposition}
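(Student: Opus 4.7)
The plan is to handle the two implications separately. The converse direction---that a quadratic form satisfies (a)--(f)---is routine: writing $F(\zeta)=b(\zeta,\zeta)$ for a sesqui\-lin\-e\-ar form $b\ge0$, every property follows by direct expansion, with (e) immediate from sesqui\-lin\-e\-ar\-ity, (f) the usual parallelogram identity, (d) the special case $z=i$ of (e), and (b), (c) corollaries of (e) and (f) respectively.

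For the nontrivial direction I would first extract sharper statements from (a)--(d) before building a sesqui\-lin\-e\-ar form. Applying (b) twice---once with $t$ and once with $1/t$ on the argument $t\zeta$ ($t>0$)---upgrades (b) to the equality $F(t\zeta)=t^2F(\zeta)$ for $t\ge 0$. Iterating (d) yields $F(-\zeta)=F(i(i\zeta))=F(\zeta)$, which extends the scaling equality to all $t\in\R$. Combined with (c) and (a), these relations imply that $D:=\{\zeta\in\hil:F(\zeta)<\infty\}$ is a complex linear subspace of~$\hil$.

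On $D$ I would then define the candidate form via the complex polarization identity
$$
b(\zeta,\eta):=\tfrac{1}{4}\bigl[F(\zeta+\eta)-F(\zeta-\eta)+iF(\zeta+i\eta)-iF(\zeta-i\eta)\bigr],
$$
and check that $b(\zeta,\zeta)=F(\zeta)$ directly, using (a), (d) and the scaling identities. The key step is to show $b$ is sesqui\-lin\-e\-ar. The inequality (c) can be promoted to the parallelogram \emph{equality} on $D$ by a standard symmetrization: applying (c) to the averages $(\zeta+\eta)/2$ and $(\zeta-\eta)/2$ and invoking the $t^2$-scaling produces the reverse inequality. With parallelogram equality in hand, the classical Jordan--von Neumann argument gives biadditivity and $\Q$-homogeneity of $b$ in each argument; the scaling law $F(t\zeta)=t^2F(\zeta)$ for $t\in\R$ then lifts this to $\R$-homogeneity, while (d) supplies the correct factor $i$, yielding full $\C$-linearity in the first argument and conjugate linearity in the second.

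The main obstacle I expect is precisely the passage from $\Q$-linearity of $b$ to $\R$-linearity without any assumed continuity of $F$: here the quantitative scaling law $F(t\zeta)=t^2F(\zeta)$ for $t\in\R$, extracted at the outset from (b) and (d), is indispensable, since it pins the value of $b(t\zeta,\eta)$ to $t\,b(\zeta,\eta)$ for \emph{all} real $t$ and not merely rational ones. A secondary subtlety is that the polarization must be genuinely $\C$-linear rather than $\R$-bilinear, and this is exactly what condition (d) is tailored to ensure through the factors $F(\zeta\pm i\eta)$ in the definition of $b$.
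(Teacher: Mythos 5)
Your argument is correct and follows essentially the same route as the paper: the paper invokes Dal Maso's Proposition~11.9 for the real restriction of $F$ and then passes to the complex form $b=B(\cdot,\cdot)-iB(\cdot,i\cdot)$ using condition d), which is exactly your complex polarization identity written with the opposite linearity convention (linear in the second rather than the first argument). The only substantive difference is that you reprove the real Jordan--von Neumann step (parallelogram equality by symmetrization, $\Q$-homogeneity, and the lift to $\R$-homogeneity via the scaling law $F(t\zeta)=t^2F(\zeta)$) instead of citing it, which makes the argument self-contained but does not change the approach.
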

\begin{proof} 
If we first restrict $F$ to real scalars and keep the notation $F$ (so {\it d)} becomes
meaningless), then by Proposition~11.9 of~\cite{DalMaso} (since {\it a)}, {\it b)}, {\it c)} hold true) this restriction is
the quadratic form associated with a real sesqui\-lin\-e\-ar form $B:Y\times Y\to \R$,
$Y=\{\zeta\in\hil: F(\zeta)<\infty\}$, given by
\[ B(\zeta,\eta)=\frac14\left( F(\zeta+\eta)-F(\zeta-\eta) \right).
\] Note that $B(\zeta):=B(\zeta,\zeta)=F(\zeta)$ and, in particular,
$B(\zeta,\eta)=B(\eta,\zeta)$, $B(t\zeta,s\eta)=tsB(\zeta,\eta),$ for all $ \zeta,\eta\in Y$, for all
$ t,s\in\R$. Further, except \rm{d)}, $B(\zeta)$ satisfies all items
{\it a)},$\cdots$,{\it f)} but with the restriction $z\in\R$ in {\it e)}.   

Our task now is to introduce an
appropriate complex version of~$B$, also defined on~$Y$. To this end first extend~$B$ by considering the original~$F$ (i.e., without the restriction to real scalars) in the above expression, then define
\[ b(\zeta,\eta):= B(\zeta,\eta)-i B(\zeta,i\eta),
\] and we will check that it works, that is, that $b$ is a (complex) sesqui\-lin\-e\-ar form and $b(\zeta)=F(\zeta)$, for all~$\zeta$.    The motivation for this expression for $b$ comes from  the following remark: if $u:Y\to\C$ is a linear functional, then $u(i\zeta)=\Ree u(i\zeta)+i\Imm u(i\zeta)=iu(\zeta)=i\Ree u(\zeta)-\Imm u(\zeta)$,  that is,  the
relation $\Imm u(\zeta)=-\Ree u(i\zeta)$ is valid (recall that $b$ must be linear in the second variable).  By \rm{d)} and the definitions of~$B$ and $b$ it follows that
\begin{itemize}
\item $B(i\zeta,i\eta)=B(\zeta,\eta)$ and $b(i\zeta,i\eta)=b(\zeta,\eta)$, $\forall
\zeta,\eta\in Y$. In particular item {\it d)} holds for~$b$.
\item For all $\zeta,\eta$:
\begin{eqnarray*} b(\zeta,i\eta)&=& B(\zeta,i\eta)-i B(\zeta,-\eta)=B(\zeta,i\eta)+iB(\zeta,\eta)\\
&=&i\left[ B(\zeta,\eta) -i B(\zeta,i\eta)
\right] = ib(\zeta,\eta)
\end{eqnarray*}
\item For all $\zeta,\eta$:
\begin{eqnarray*} b(\eta,\zeta)&=& B(\eta,\zeta)-iB(\eta,i\zeta) = B(\zeta,\eta)-i B(i\zeta,\eta) \\
&=& B(\zeta,\eta)-i B(-\zeta,i\eta)=B(\zeta,\eta) + i B(\zeta,i\eta) =
\overline{b(\zeta,\eta)}.
\end{eqnarray*}
\item For all $t,s\in\R$ and $\zeta,\eta\in Y$:
\begin{eqnarray*}
 b(\zeta, (t+is)\eta)&=& b(\zeta,t\eta) + b(\zeta,is\eta)\\&=&tb(\zeta,\eta) + i s
b(\zeta,\eta)=(t+is)b(\zeta,\eta).
\end{eqnarray*}Together with the above relations this also implies $b(z\zeta)=|z|^2b(\zeta)$,
$\forall z\in\C$, that is, item \rm {e)} holds for $b$.
\end{itemize}

Finally, for $\zeta\in Y$ and $z\in\C$, $b(z\zeta)= b(z\zeta,z\zeta) = B(z\zeta,z\zeta)- i
B(z\zeta,iz\zeta)$, and by selecting $z=i$ together with item {\it d}), it follows that
\begin{eqnarray*} b(i\zeta)&=&b(\zeta),\\ B(i\zeta,i\zeta)-i B(i\zeta,-\zeta)  &=& B(\zeta,\zeta) - i
B(\zeta,i\zeta), \\
 B(i\zeta,i\zeta)+i B(\zeta,i\zeta) &=& B(\zeta,\zeta) -i B(\zeta,i\zeta),
\end{eqnarray*}so that $2i B(\zeta,i\zeta) = B(\zeta,\zeta)-B(i\zeta,i\zeta)$.  Since
$B(\cdot)$ is real, it is found that $B(\zeta,i\zeta)=0$ and so
\[ b(\zeta) = B(\zeta,\zeta)-i B(\zeta,i\zeta) = B(\zeta,\zeta)=F(\zeta).
\]This implies that $b$ satisfies {\it a),b),c),f)} since $F$ does, and  the proof of the proposition
is completed. 
\end{proof} 

The above results allow us to prove the following complex versions of important results previously proven for real Hilbert spaces and presented in Dal Maso's book. By taking into account the above propositions, the proofs are simple variations of their counterparts in the real case, and so they will be omitted. 

\begin{Proposition}\label{propCharacQuadra} Let  $T:\dom T\to\hil$ be a positive self-adjoint operator,
$\overline{\dom T}= \hil_0$, and $b^T:\hil\to\overline\R$ the quadratic form generated by~$T$. Then
\begin{eqnarray*} b^T(\zeta)&=&\sup_{\eta\in\dom T} \left[ \la T\eta,\zeta\ra +\la\zeta,T\eta\ra - \la
T\eta,\eta\ra
\right] 
\\ &=& \sup_{\eta\in\dom T} \left[ b^T(\eta)+ \la T\eta,\zeta\ra +\la\zeta,T\eta\ra - 2\la
T\eta,\eta\ra
\right],
\end{eqnarray*}for all $\zeta\in \hil_0$ and $b^T(\zeta)=\infty$ if $\zeta\in
\hil\setminus\hil_0$.
\end{Proposition}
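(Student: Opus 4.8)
The plan is to prove the displayed variational characterization of $b^T$ by exhibiting, for each fixed $\zeta\in\hil_0$, the functional $\eta\mapsto \la T\eta,\zeta\ra+\la\zeta,T\eta\ra-\la T\eta,\eta\ra$ as something whose supremum over $\dom T$ equals $b^T(\zeta)$, both when $\zeta\in\dom b^T=\dom T^{1/2}$ and when $\zeta\notin\dom b^T$. The natural route is to recognize the quantity inside the supremum as a polarization-type expression: writing $b^T(\eta)=\la T\eta,\eta\ra$ for $\eta\in\dom T$, one has
\[
b^T(\eta)+\la T\eta,\zeta\ra+\la\zeta,T\eta\ra-2\la T\eta,\eta\ra
= \la T\eta,\zeta\ra+\la\zeta,T\eta\ra-\la T\eta,\eta\ra,
\]
so the two suprema in the statement are literally the same expression, and only one estimate has to be done. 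First I would establish the upper bound: for $\eta\in\dom T$ and $\zeta\in\dom b^T$, use $\la T\eta,\zeta\ra=\la T^{1/2}\eta,T^{1/2}\zeta\ra$ (self-adjointness of $T^{1/2}$ and $T\eta=T^{1/2}(T^{1/2}\eta)$) to rewrite the bracket as $2\,\Ree\la T^{1/2}\eta,T^{1/2}\zeta\ra-\|T^{1/2}\eta\|^2$, which is $\le \|T^{1/2}\zeta\|^2=b^T(\zeta)$ by completing the square $2\,\Ree\la a,b\ra-\|a\|^2=\|b\|^2-\|a-b\|^2$. For $\zeta\in\hil\setminus\hil_0$ one must instead show the supremum is $+\infty$: since $\dom T\subset\hil_0$, for $\eta\in\dom T$ the terms $\la T\eta,\zeta\ra$ and $\la\zeta,T\eta\ra$ only see $P_0\zeta$, and it is cleanest to split $\zeta=P_0\zeta+\zeta_1$ with $\zeta_1\neq0$ and observe the bracket is independent of $\zeta_1$; then reduce to showing that the supremum over $\dom T$ of $2\,\Ree\la T^{1/2}\eta,T^{1/2}P_0\zeta\ra-\|T^{1/2}\eta\|^2$ is $\|T^{1/2}P_0\zeta\|^2$ if $P_0\zeta\in\dom b^T$ and $+\infty$ otherwise, which in either case is consistent with $b^T(\zeta)=\infty$ on $\hil\setminus\hil_0$ — wait, this needs care, so I would actually argue directly that the supremum is $+\infty$ whenever $\zeta\notin\hil_0$ by a separate device (see below).

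Next I would establish the matching lower bound (attainment of the supremum) for $\zeta\in\dom b^T$. The obstacle is that $\zeta$ itself need not lie in $\dom T$, so one cannot just plug $\eta=\zeta$ into the bracket. Instead I would use a sequence $\eta_n\in\dom T$ with $\eta_n\to\zeta$ in the form norm, i.e.\ $\eta_n\to\zeta$ in $\hil_0$ and $T^{1/2}\eta_n\to T^{1/2}\zeta$ in $\hil$; such $\eta_n$ exist because $\dom T$ is a core for $T^{1/2}$ (for instance $\eta_n=(\Id+T/n)^{-1}\zeta$, using the spectral theorem in $\hil_0$). Along this sequence the bracket equals $2\,\Ree\la T^{1/2}\eta_n,T^{1/2}\zeta\ra-\|T^{1/2}\eta_n\|^2\to 2\|T^{1/2}\zeta\|^2-\|T^{1/2}\zeta\|^2=b^T(\zeta)$, so the supremum is at least $b^T(\zeta)$; combined with the upper bound it equals $b^T(\zeta)$. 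If $\zeta\in\hil_0\setminus\dom b^T$, then $b^T(\zeta)=\infty$ and the same approximating sequence has $\|T^{1/2}\eta_n\|\to\infty$; a short computation with the Cauchy–Schwarz step $2\,\Ree\la T^{1/2}\eta_n,T^{1/2}\zeta\ra-\|T^{1/2}\eta_n\|^2$ does not obviously blow up, so here I would instead use a dilation trick: replace $\eta_n$ by $t_n\eta_n$ and optimize $t_n\in\R$, which yields the bracket $\|T^{1/2}\zeta\|_{\text{partial}}^2$-type quantities tending to $+\infty$ because $\la T^{1/2}\eta_n,T^{1/2}\zeta\ra$ can be arranged to grow while $\|T^{1/2}\eta_n\|$ is controlled — this is exactly the spot where I expect the real work, because one has to be quantitative about how fast $\eta_n$ approaches $\zeta$ in $\hil_0$ versus how the form values behave.

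Finally, for $\zeta\in\hil\setminus\hil_0$, I would handle the divergence of the supremum as follows: pick $\zeta_1=P_0^\perp\zeta\neq0$; the claim $b^T(\zeta)=\infty$ is by the stated convention, and on the variational side note that the bracket $\la T\eta,\zeta\ra+\la\zeta,T\eta\ra-\la T\eta,\eta\ra$ equals $\la T\eta,P_0\zeta\ra+\la P_0\zeta,T\eta\ra-\la T\eta,\eta\ra$, so its supremum over $\dom T$ is just $b^T(P_0\zeta)$ by the already-proven case if $P_0\zeta\in\dom b^T$ — which would be finite, contradicting the assertion. Therefore I suspect the intended reading is that this second display is asserted only for $\zeta\in\hil_0$ (as the statement indeed says ``for all $\zeta\in\hil_0$''), and the clause ``$b^T(\zeta)=\infty$ if $\zeta\in\hil\setminus\hil_0$'' is merely recording the convention on $b^T$, not a claim about the supremum. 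With that understanding the proof reduces to the two bullet points above — upper bound via completing the square with $T^{1/2}$, lower bound via a form-core approximating sequence (with the dilation refinement when $b^T(\zeta)=\infty$) — and since, as noted in the paragraph preceding the proposition in the excerpt, this is a routine transcription of Dal Maso's real-space argument using Propositions~\ref{propMudancaComplex} and~\ref{ComplexQF} to license the complex polarization identities, the details can safely be omitted.
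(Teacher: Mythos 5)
Your route is genuinely different from the paper's. The paper does not evaluate the supremum directly: it defines $F(\zeta)$ as the supremum, checks via Proposition~\ref{ComplexQF} that $F$ is a (complex) quadratic form --- the only new ingredient over Dal Maso's Theorem~12.21 being the verification $F(i\zeta)=F(\zeta)$ recorded in the Remark --- and then identifies the generator of $F$ with $T$ through the minimization characterization of Proposition~\ref{propMudancaComplex}. Your direct argument (completing the square with $T^{1/2}$ for the upper bound, a form-core sequence for attainment) is more elementary and self-contained, and your reading of the statement is right: the displayed formula is asserted only for $\zeta\in\hil_0$, the bracket depends on $\zeta$ only through $P_0\zeta$, and the final clause merely records the convention defining $b^T$ off $\hil_0$.

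There is, however, one genuine gap: the case $\zeta\in\hil_0\setminus\dom T^{1/2}$, where you must show the supremum is $+\infty$. Your sketch there cannot be repaired as written, because it speaks of making $\la T^{1/2}\eta_n,T^{1/2}\zeta\ra$ large while $T^{1/2}\zeta$ does not exist for such $\zeta$. Two standard fixes. (i) Contrapositive: if the supremum were $\le M<\infty$, then replacing $\eta$ by $te^{i\theta}\eta$ and optimizing over $t>0$ and $\theta$ yields $|\la T\eta,\zeta\ra|^2\le M\,\la T\eta,\eta\ra$ for all $\eta\in\dom T$; this bounds the functional $T^{1/2}\eta\mapsto\la T^{1/2}(T^{1/2}\eta),\zeta\ra$ and, after a density and Riesz-representation step, forces $\zeta\in\dom (T^{1/2})^{*}=\dom T^{1/2}$, a contradiction. (ii) Cleaner: the sequence $\eta_n=(\Id+T/n)^{-1}\zeta$ that you already introduced handles the finite and infinite cases simultaneously. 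By the spectral theorem in $\hil_0$, the bracket evaluated at $\eta_n$ equals $\int_{[0,\infty)}\lambda\,\frac{1+2\lambda/n}{(1+\lambda/n)^{2}}\,{\mathrm d}\mu_\zeta(\lambda)$, and the integrand increases pointwise to $\lambda$ as $n\to\infty$; by monotone convergence the bracket tends to $\int\lambda\,{\mathrm d}\mu_\zeta(\lambda)=b^T(\zeta)$, whether this is finite or $+\infty$. With either completion your proof is correct; without one, the identity on $\hil_0\setminus\dom T^{1/2}$ remains unproved.
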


\begin{Remark} The main difference in the proof of Proposition~\ref{propCharacQuadra} with respect to
the real case (see Theorem~12.21 in~\cite{DalMaso}) is the following. For
$\zeta\in\hil_0$ denote 
\[ F(\zeta) = \sup_{\eta\in\dom T} \left[ \la T\eta,\zeta\ra +\la\zeta,T\eta\ra - \la T\eta,\eta\ra
\right] 
\]and one needs to check that $F$ is a quadratic form, and a way of doing this is to employ our Proposition~\ref{ComplexQF}, in particular to check
\begin{eqnarray*} F(i\zeta) &=&  \sup_{\eta\in\dom T} \left[ \la T\eta,i\zeta\ra +\la i\zeta,T\eta\ra
- \la T\eta,\eta\ra \right] 
\\ &=&  \sup_{(-i\eta)\in\dom T} \left[ \la T(-i\eta),\zeta\ra +\la\zeta,T(-i\eta)\ra - \la
T(-i\eta),(-i\eta)\ra
\right] 
\\ &=& F(\zeta).
\end{eqnarray*}
\end{Remark}

By using the above proposition one gets the following complex version of Theorem~13.5 in
\cite{DalMaso}: 
\begin{Theorem}\label{GammaWeakRW} Let $b_\varepsilon,b\ge \beta>0$ be sesqui\-lin\-e\-ar forms on the complex~$\hil$ and $T_\varepsilon,T\ge\beta\Id$ the corresponding associated self-adjoint operators, and let $\overline{\dom T}=\hil_0\subset\hil$. Let $P_0$ denote the orthogonal projection onto~$\hil_0$. Then the  following  statements are equivalent: 
\begin{itemize}
\item[i)] $b_\varepsilon \wgconv b$.
\item[ii)] $R_0(T_\varepsilon)$ converges weakly to $R_0(T)P_0$.
\end{itemize} 
\end{Theorem}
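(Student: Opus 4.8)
The plan is to build everything on the two facts already in hand: Proposition~\ref{propMudancaComplex}, which identifies $R_0(T)P_0\eta$ as the unique minimizer of the functional $g^\eta(\zeta):=b(\zeta)-\la\eta,\zeta\ra-\la\zeta,\eta\ra$, and Proposition~\ref{propCharacQuadra}, which recovers $b$ as a supremum over $\img T$. Write $A_\var:=R_0(T_\var)$, $A:=R_0(T)P_0$, and $\hil_{0,\var}:=\overline{\dom T_\var}$ with orthogonal projection $P_{0,\var}$; since $b_\var,b\ge\beta\Id$ these are bounded positive self-adjoint operators with $\|A_\var\|,\|A\|\le 1/\beta$, $A_\var=A_\var P_{0,\var}=P_{0,\var}A_\var$, $A=AP_0=P_0A$, and $T_\var A_\var\eta=P_{0,\var}\eta$. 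The computation $b_\var(A_\var\eta)=\la T_\var A_\var\eta,A_\var\eta\ra=\la P_{0,\var}\eta,A_\var\eta\ra=\la\eta,A_\var\eta\ra$ (and, likewise, $b(A\eta)=\la\eta,A\eta\ra$) will be the workhorse identity.

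(i)$\Rightarrow$(ii): Fix $\eta\in\hil$. The functional $\zeta\mapsto-\la\eta,\zeta\ra-\la\zeta,\eta\ra$ is sequentially weakly continuous on~$\hil$, so the stability of $\Gamma$-convergence under weakly continuous perturbations---whose proof is unchanged in the complex setting---turns $b_\var\wgconv b$ into $g^\eta_\var\wgconv g^\eta$, where $g^\eta_\var(\zeta)=b_\var(\zeta)-\la\eta,\zeta\ra-\la\zeta,\eta\ra$. From $b_\var\ge\beta\Id$ one gets $g^\eta_\var(\zeta)\ge\beta\|\zeta\|^2-2\|\eta\|\,\|\zeta\|$, so the family $(g^\eta_\var)$ is equi-coercive; by Proposition~\ref{propMudancaComplex} (and bijectivity of $T_\var$ on $\hil_{0,\var}$) its unique minimizer is $A_\var\eta$, and similarly $A\eta$ minimizes $g^\eta$. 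The fundamental theorem of $\Gamma$-convergence (in the weak topology, using equi-coercivity; the argument is the real one verbatim) then gives $\min g^\eta_\var\to\min g^\eta$ and that every weak cluster point of the minimizers minimizes $g^\eta$; by uniqueness $A_\var\eta\wseta A\eta$. Since $\eta$ is arbitrary and $\sup_\var\|A_\var\|<\infty$, this says precisely that $R_0(T_\var)$ converges to $R_0(T)P_0$ in the weak operator topology.

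(ii)$\Rightarrow$(i): Setting $\mu=T_\var\eta$ in Proposition~\ref{propCharacQuadra} and using that $T_\var$ maps $\hil_{0,\var}$ bijectively onto itself, that the bracket $\la\mu,\zeta\ra+\la\zeta,\mu\ra-\la A_\var\mu,\mu\ra$ depends only on $P_{0,\var}\mu$, and that it forces the supremum to be $+\infty$ whenever $\zeta\notin\hil_{0,\var}$, one rewrites, for every $\zeta\in\hil$,
\[
b_\var(\zeta)=\sup_{\mu\in\hil}\bigl[\la\mu,\zeta\ra+\la\zeta,\mu\ra-\la A_\var\mu,\mu\ra\bigr],
\qquad
b(\zeta)=\sup_{\mu\in\hil}\bigl[\la\mu,\zeta\ra+\la\zeta,\mu\ra-\la A\mu,\mu\ra\bigr].
\]
The $\Gamma$-$\liminf$ inequality is then immediate: if $\zeta_\var\wseta\zeta$, then for each fixed $\mu$ the weak convergence together with $A_\var\to A$ in the weak operator topology gives $\liminf_\var b_\var(\zeta_\var)\ge\la\mu,\zeta\ra+\la\zeta,\mu\ra-\la A\mu,\mu\ra$, and a supremum over $\mu$ yields $\liminf_\var b_\var(\zeta_\var)\ge b(\zeta)$. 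For the recovery sequence one may take $\zeta_\var=\zeta$ when $b(\zeta)=\infty$; when $\zeta\in\dom b$, use that $\img A=\dom T$ is a core for $b$ to choose $\eta_n$ with $A\eta_n\to\zeta$ in the form norm (so $\|A\eta_n-\zeta\|\to0$ and $b(A\eta_n)\to b(\zeta)$). For each fixed $n$, (ii) gives $A_\var\eta_n\wseta A\eta_n$ and hence, by the workhorse identity, $b_\var(A_\var\eta_n)=\la\eta_n,A_\var\eta_n\ra\to\la\eta_n,A\eta_n\ra=b(A\eta_n)$, so $(A_\var\eta_n)_\var$ recovers $A\eta_n$. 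A diagonal extraction---admissible since the weak topology on balls of~$\hil$ is metrizable, with boundedness guaranteed by $\limsup_\var\|A_\var\eta_n\|^2\le\beta^{-1}\la\eta_n,A\eta_n\ra=\beta^{-1}b(A\eta_n)$, which stays bounded---produces $n(\var)\to\infty$ with $\zeta_\var:=A_\var\eta_{n(\var)}$ bounded, $\zeta_\var\wseta\zeta$, and $\limsup_\var b_\var(\zeta_\var)\le b(\zeta)$. With the $\Gamma$-$\liminf$ inequality this is $b_\var\wgconv b$.

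I expect the main obstacle to be the recovery sequence in (ii)$\Rightarrow$(i): it must knit together the core property of $\dom T$ for the \emph{limit} form, the merely weak convergence $A_\var\eta_n\wseta A\eta_n$ supplied by the hypothesis, and an Attouch-type diagonalization, all while keeping the approximants in a fixed ball so that the diagonal sequence still converges weakly to $\zeta$. Everything else---the perturbation stability of weak $\Gamma$-convergence, the fundamental theorem of $\Gamma$-convergence, and the dualization of $b$---is, exactly as in Dal~Maso's treatment of the real case, insensitive to the passage to complex scalars once Propositions~\ref{propMudancaComplex}, \ref{ComplexQF} and~\ref{propCharacQuadra} are available.
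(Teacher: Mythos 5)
Your proposal is correct and follows essentially the same route as the paper, which simply defers to Dal Maso's proof of Theorem~13.5 with the complex adaptations supplied by Propositions~\ref{propMudancaComplex} and~\ref{propCharacQuadra}: the minimizer characterization of $R_0(T_\varepsilon)\eta$ plus equi-coercivity for (i)$\Rightarrow$(ii), and the dual supremum representation plus a core/diagonalization argument for the recovery sequence in (ii)$\Rightarrow$(i). Your explicit uniform bound $\limsup_\varepsilon\|A_\varepsilon\eta_n\|^2\le\beta^{-1}b(A\eta_n)$ keeping the diagonal sequence in a fixed ball is exactly the detail needed to make the weak-topology diagonalization legitimate.
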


With such ``complex'' tools at hand, the proof of Theorem~\ref{mainTheorGamma} follows the same steps of the proof of its real counterpart, i.e.,  Theorem~13.6 and Corollary~13.7 in~\cite{DalMaso}. 

Now we state sufficient conditions to obtain norm resolvent convergence of  operators from $\Gamma$-convergence. The following theorem was proven in~\cite{CRO}, the proof for complex Hilbert spaces is similar and doesn't require further comments.

\begin{Proposition}\label{GammaNorm} Let~$\hil$ be a real or complex Hilbert space, $b_\varepsilon,b\ge \beta>-\infty$ be closed sesqui\-lin\-e\-ar forms and $T_\varepsilon,T\ge\beta\Id$ the corresponding associated self-adjoint operators, and let $\overline{\dom T}=\hil_0\subset\hil$. Assume that the following three conditions hold:
\begin{itemize}
\item[a)] $b_\varepsilon\sgconv b$ and $b_\varepsilon\wgconv b$. 
\item[b)] The resolvent operator $R_{-\lambda}(T)$ is compact in $\hil_0$ for some real number
$\lambda>|\beta|$.
\item[c)] There exists a Hilbert space $\mathcal K$,  compactly embedded in~$\hil$, so that if $(\psi_\varepsilon)$ is bounded in~$\hil$ and the sequence $(b_\varepsilon(\psi_\varepsilon))$ is also bounded, then $(\psi_\varepsilon)$ is a bounded subset of~$\mathcal K$.
\end{itemize} Then, $T_\varepsilon$ converges in norm resolvent sense to~$T$ in $\hil_0$ as $\varepsilon\to0$.
\end{Proposition}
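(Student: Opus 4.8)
The plan is to upgrade the \emph{strong} resolvent convergence that hypothesis~(a) already supplies to \emph{norm} resolvent convergence, the extra input being the compactness carried by~(b) and~(c). Fix the real number $\lambda>|\beta|$ furnished by~(b). Since $T_\var\ge\beta\Id$ and $\lambda+\beta>0$, each $R_{-\lambda}(T_\var)=(T_\var+\lambda)^{-1}$ is a bounded self-adjoint operator on~$\hil$ with $\|R_{-\lambda}(T_\var)\|\le(\lambda+\beta)^{-1}$ (if some $T_\var$ fails to be densely defined, read $R_{-\lambda}(T_\var)$ as $R_{-\lambda}(T_\var)P_{0,\var}$, exactly as in item~(iv) of Theorem~\ref{mainTheorGamma}). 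Hypothesis~(a) is precisely item~(ii) of Theorem~\ref{mainTheorGamma}, so by the equivalence (ii)$\Leftrightarrow$(iv) one already has $R_{-\lambda}(T_\var)\zeta\too R_{-\lambda}(T)P_0\zeta$ strongly for every $\zeta\in\hil$; the task is to make this uniform on the unit ball.

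The core of the argument is the claim: \emph{if $\var_n\too0$ and $\xi_n\wseta0$ in $\hil$ with $(\xi_n)$ bounded, then $w_n:=R_{-\lambda}(T_{\var_n})\xi_n\too0$ strongly in $\hil$.} To prove it I would first record the a priori (energy) estimate
\[
b_{\var_n}(w_n)+\lambda\|w_n\|^2=\la(T_{\var_n}+\lambda)w_n,w_n\ra=\la\xi_n,w_n\ra\le\|\xi_n\|\,\|w_n\|,
\]
whose left-hand side is real and $\ge\beta\|w_n\|^2$; together with the uniform bound $\|w_n\|\le(\lambda+\beta)^{-1}\|\xi_n\|$ this shows that $(w_n)$ is bounded in $\hil$ and $(b_{\var_n}(w_n))$ is a bounded sequence of reals. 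Hypothesis~(c) then makes $(w_n)$ bounded in $\mathcal K$, hence relatively compact in $\hil$. To pin down the limit I would use self-adjointness of $R_{-\lambda}(T_{\var_n})$ together with the already established strong resolvent convergence applied to a fixed $\varphi\in\hil$: from
\[
\la w_n,\varphi\ra=\la\xi_n,\,R_{-\lambda}(T_{\var_n})\varphi\ra
\]
and $R_{-\lambda}(T_{\var_n})\varphi\too R_{-\lambda}(T)P_0\varphi$ strongly, the weak--strong pairing forces $\la w_n,\varphi\ra\too0$ for every $\varphi$, i.e.\ $w_n\wseta0$. A relatively compact sequence that converges weakly to $0$ converges to $0$ in norm, which proves the claim.

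Granting the claim I would conclude by contradiction. If $T_\var$ did not converge to $T$ in the norm resolvent sense, there would be $\delta>0$, a sequence $\var_n\too0$ and unit vectors $\zeta_n$ with $\|R_{-\lambda}(T_{\var_n})\zeta_n-R_{-\lambda}(T)P_0\zeta_n\|\ge\delta$. Passing to a subsequence, $\zeta_n\wseta\zeta$. Splitting $\zeta_n=(\zeta_n-\zeta)+\zeta$, applying the claim to $\xi_n=\zeta_n-\zeta\wseta0$ and using $R_{-\lambda}(T_{\var_n})\zeta\too R_{-\lambda}(T)P_0\zeta$ gives $R_{-\lambda}(T_{\var_n})\zeta_n\too R_{-\lambda}(T)P_0\zeta$ strongly. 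On the other hand $P_0\zeta_n\wseta P_0\zeta$ and $R_{-\lambda}(T)P_0$ is compact on $\hil$ by~(b), so $R_{-\lambda}(T)P_0\zeta_n\too R_{-\lambda}(T)P_0\zeta$ strongly as well. Subtracting contradicts the lower bound~$\delta$; hence $\|R_{-\lambda}(T_\var)-R_{-\lambda}(T)P_0\|\too0$, which is the asserted norm resolvent convergence in $\hil_0$.

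I expect the claim to be the only genuine obstacle, and within it the step invoking hypothesis~(c): this is the sole place where compactness not already encoded in Theorem~\ref{mainTheorGamma} enters, and one must check that the energy estimate survives the passage to complex scalars (the pairing $\la\xi_n,w_n\ra$ is no longer manifestly real, but here it coincides with the form value $b_{\var_n}(w_n)+\lambda\|w_n\|^2$, so nothing changes). The remaining ingredients are standard weak-compactness and contradiction bookkeeping and present no difficulty.
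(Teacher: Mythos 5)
Your proof is correct. Note that the paper itself gives no argument for Proposition~\ref{GammaNorm} — it defers to the real-Hilbert-space proof in~\cite{CRO} with the remark that the complex case "doesn't require further comments" — and your write-up is exactly that standard argument: strong resolvent convergence from Theorem~\ref{mainTheorGamma}, the energy estimate $b_{\var_n}(w_n)+\lambda\|w_n\|^2=\la\xi_n,w_n\ra$ feeding hypothesis~(c) to get precompactness of $R_{-\lambda}(T_{\var_n})\xi_n$ for weak null sequences $\xi_n$, and hypothesis~(b) to handle the limit operator in the contradiction step. All the delicate points (non-dense $\dom T$, realness of the pairing, transferring the norm estimate from the single point $-\lambda$ to the rest of the resolvent set) are either addressed or routine, so this fills the gap the paper leaves to the reference.
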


\subsection{Norm convergence of quadratic forms}
In some situations the convergence of quadratic forms may also imply the norm resolvent convergence of the corresponding operators.  This subject was discussed in~\cite{AAV} for real Hilbert spaces. In the following, we (re)state and prove a complex version  of a result in~\cite{AAV} which will be useful in this work; we also correct an imprecision in the previous proof.
   
\begin{Theorem}\label{teocorrecao}
Let $(b_\var)_\var$, $(m_\var)_\var$ be two sequences of positive and closed sesquilinear quadratic forms in a complex Hilbert space
${\cal H}$ with $\dom b_\var = \dom m_\var = {\cal D}$,
for all $\var >  0$, and $B_\var$, $M_\var$ the self-adjoint operators associated with $(b_\var)_\var$
and $(m_\var)_\var$, respectively. Suppose that there is $\lambda >0$ so that $b_\var, m_\var \geq \lambda$, for
all $\var > 0$, and
\begin{equation}\label{quadrcorrecao}
|b_\var(\psi) - m_\var(\psi)| \leq q(\var) \, m_\var(\psi), \qquad \forall \psi \in {\cal D},
\end{equation}
with $q(\var) \too 0$ as $\var \too 0$. Then, there exists $C > 0$ so that, for $\var > 0$ small enough,
$$\|  B_\var^{-1} - M_\var^{-1} \|  \leq C \, q(\var).$$
\end{Theorem}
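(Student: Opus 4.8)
The plan is to compare the two resolvents through the sesquilinear forms and the quadratic form inequality~\eqref{quadrcorrecao}, using the polar/quadratic identity to transfer the estimate from the diagonal to the whole form. First I would record the consequences of~\eqref{quadrcorrecao}: since $b_\var$ and $m_\var$ are positive closed forms on a common domain $\cal D$, the estimate $|b_\var(\psi)-m_\var(\psi)|\le q(\var)\,m_\var(\psi)$ on the diagonal extends, via polarization and the Cauchy--Schwarz inequality for the positive form $m_\var$, to $|b_\var(\psi,\phi)-m_\var(\psi,\phi)|\le C\,q(\var)\,m_\var(\psi)^{1/2}m_\var(\phi)^{1/2}$ for all $\psi,\phi\in\cal D$; here the factor $C$ (a fixed constant, independent of $\var$) comes from writing the sesquilinear difference as a combination of four diagonal differences in the complex case. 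In particular the forms are mutually bounded, so $\dom B_\var^{1/2}=\dom M_\var^{1/2}=\cal D$ and the two self-adjoint operators have comparable form domains uniformly in $\var$.

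Next I would use the standard resolvent-difference formula in form language. For $u=B_\var^{-1}f$ and $v=M_\var^{-1}f$ with $f\in\cal H$ arbitrary, one has $b_\var(u,\phi)=\la f,\phi\ra=m_\var(v,\phi)$ for all $\phi\in\cal D$. Taking $\phi=v-u$ (legitimate since both $u,v\in\cal D$) gives
\[
m_\var(v-u) = m_\var(v,v-u)-m_\var(u,v-u)
= b_\var(u,v-u)-m_\var(u,v-u),
\]
so by the polarized estimate above,
\[
m_\var(v-u)\le C\,q(\var)\,m_\var(u)^{1/2}\,m_\var(v-u)^{1/2}.
\]
Hence $m_\var(v-u)^{1/2}\le C\,q(\var)\,m_\var(u)^{1/2}$. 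Since $m_\var\ge\lambda$, this yields $\|v-u\|\le \lambda^{-1/2}m_\var(v-u)^{1/2}\le C\lambda^{-1/2}q(\var)\,m_\var(u)^{1/2}$. It remains to bound $m_\var(u)^{1/2}$ in terms of $\|f\|$ uniformly in $\var$: from $m_\var(u)\le (1+q(\var))^{-1}\cdots$ — more directly, $m_\var(u)\le (1-q(\var))^{-1}b_\var(u) = (1-q(\var))^{-1}\la f,u\ra \le (1-q(\var))^{-1}\|f\|\,\|u\| \le (1-q(\var))^{-1}\lambda^{-1}\|f\|^2$, using $b_\var(u)=\la f,u\ra$ and $\|u\|\le\lambda^{-1}\|f\|$. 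For $\var$ small enough, $q(\var)\le 1/2$, so $m_\var(u)^{1/2}\le (2/\lambda)^{1/2}\|f\|$. Combining, $\|B_\var^{-1}f-M_\var^{-1}f\|\le C'\,q(\var)\,\|f\|$ with $C'$ depending only on $\lambda$, which is the claimed operator-norm bound.

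I expect the main obstacle — and the place where the ``imprecision in the previous proof'' alluded to in the text presumably lives — to be the passage from the diagonal inequality~\eqref{quadrcorrecao} to a genuine sesquilinear estimate with a clean, $\var$-independent constant, and the careful bookkeeping that this only requires $m_\var$ (not $b_\var$) to be controlled on the right-hand side. In the complex case the polarization identity expresses $b_\var(\psi,\phi)-m_\var(\psi,\phi)$ as $\tfrac14\sum_{k=0}^{3} i^k\big(b_\var(\psi+i^k\phi)-m_\var(\psi+i^k\phi)\big)$, each term bounded by $q(\var)\,m_\var(\psi+i^k\phi)$, and then one must expand $m_\var(\psi+i^k\phi)\le 2m_\var(\psi)+2m_\var(\phi)$ and absorb everything into a Cauchy--Schwarz-type bound $\lesssim q(\var)\,m_\var(\psi)^{1/2}m_\var(\phi)^{1/2}$; doing this honestly (rather than by a handwave) is the only subtle point. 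Everything else is the routine form-theoretic resolvent manipulation sketched above, and the lower bound $b_\var,m_\var\ge\lambda$ is exactly what makes the constants uniform in $\var$.
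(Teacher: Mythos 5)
Your argument is correct, and it departs from the paper's proof at the decisive step. Both proofs polarize the Hermitian difference form $d_\var=b_\var-m_\var$ and use \eqref{quadrcorrecao} on each of the four diagonal terms, together with the parallelogram law, to get $|d_\var(\psi,\phi)|\le q(\var)\,[\,m_\var(\psi)+m_\var(\phi)\,]$ (up to constants). You then homogenize this additive bound into the product form $|d_\var(\psi,\phi)|\le C\,q(\var)\,m_\var(\psi)^{1/2}m_\var(\phi)^{1/2}$ --- the scaling $\psi\mapsto t\psi$, $\phi\mapsto t^{-1}\phi$ followed by optimization over $t>0$, which is the one step you only gesture at and should write out --- and estimate $\|B_\var^{-1}f-M_\var^{-1}f\|$ directly from the identity $m_\var(v-u)=d_\var(u,v-u)$ for $u=B_\var^{-1}f$, $v=M_\var^{-1}f$, dividing by $m_\var(v-u)^{1/2}$ and using $m_\var\ge\lambda$. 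The paper instead keeps the additive bound (in the form $q(\var)[\,C_1 b_\var(u)+m_\var(\tilde u)\,]$), substitutes $u=B_\var^{-1}g$, $\tilde u=M_\var^{-1}\tilde g$ so that $b_\var(u,\tilde u)-m_\var(u,\tilde u)$ becomes the matrix element $\la (B_\var^{-1}-M_\var^{-1})g,\tilde g\ra$, and closes by setting $\tilde g=g$ and invoking the fact that for the \emph{self-adjoint} bounded operator $B_\var^{-1}-M_\var^{-1}$ the norm equals $\sup_{\|g\|=1}|\la(B_\var^{-1}-M_\var^{-1})g,g\ra|$; that numerical-radius identity is precisely what lets the paper avoid your homogenization step, since an additive bound $a\|g\|^2+b\|\tilde g\|^2$ does not control the bilinear form off the diagonal. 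Your route costs the extra scaling argument but yields a genuine Cauchy--Schwarz-type sesquilinear estimate and a pointwise bound $\|B_\var^{-1}f-M_\var^{-1}f\|\le C'q(\var)\|f\|$ without ever appealing to self-adjointness of the resolvent difference; both proofs are valid and give the same conclusion.
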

\begin{proof}
Let $b_\var(u, \tilde{u})$ and $r_\var(u, \tilde{u})$ the sesquilinear forms associated with
$b_\var(u)$ and $m_\var(u)$, respectively. Recall the polarization identity
$$b_\var(u, \tilde{u}) = \frac{1}{4} \left[ b_\var(u+\tilde{u}) - b_\var(u -\tilde{u}) - i\,
b_\var(u + i \, \tilde{u}) + i \, b_\var(u -i \, \tilde{u})\right],$$
which will be used ahead.

Note that condition~\eqref{quadrcorrecao} implies 
$$(1 - q(\var)) m_\var(\psi) \leq b_\var(\psi) \leq (1 + q(\var)) m_\var(\psi), \qquad \forall \psi \in{\cal D}.$$
As $q(\var) \too 0$, there exist $\var_0 > 0$ and a number $C_1 > 0$ so that
$m_\var(\psi) \leq C_1 \, b_\var(\psi)$, for all~$ \var < \var_0$ and~$ \psi \in {\cal D}$.

For $u, \tilde{u} \in {\cal D}$, one has
\begin{eqnarray*}
& &
\left| \langle B_\var^{1/2} u, B_\var^{1/2} \tilde{u}    \rangle -  \langle M_\var^{1/2} u, M_\var^{1/2} \tilde{u}    \rangle \right| 
 = 
\left| b_\var(u, \tilde{u}) - m_\var(u, \tilde{u})  \right| \\ 
& = &
(1/4) \, \left| b_\var(u+\tilde{u}) - m_\var(u+\tilde{u}) - b_\var(u-\tilde{u}) + m_\var(u-\tilde{u}) \right.\\
& - & \left.
i \, b_\var(u + i \, \tilde{u}) + i \, m_\var(u + i \, \tilde{u}) 
+ i \, b_\var(u - i \, \tilde{u}) - i \, m_\var(u - i \, \tilde{u}) \right| \\
& \leq &
(1/4) \, q(\var)  \left[
m_\var(u+\tilde{u}) + m_\var(u-\tilde{u}) +  m_\var(u+ i \,\tilde{u}) + m_\var(u-i \,\tilde{u})\right] \\
& = &
q(\var) 
\left[ m_\var(u) + m_\var(\tilde{u}) \right] \\
& \leq &
q(\var) 
\left[ C_1 b_\var(u) + m_\var(\tilde{u}) \right].
\end{eqnarray*}
Taking $u = B_\var^{-1} g$ and $\tilde{u} = M_\var^{-1} \tilde{g}$, with $g, \tilde{g} \in \LL^2({\cal H})$, one has
\begin{eqnarray*}
\left|\langle (B_\var^{-1}-M_\var^{-1}) g, \tilde{g} \rangle \right| & \leq &
q(\var) \left[ C_1\, \langle B_\var^{-1} g,  g \rangle +  \langle M_\var^{-1} \tilde{g},  \tilde{g} \rangle  \right] \\
& \leq &
q(\var) \left[ C_1 \| B_\var^{-1} \| \| g \|^2 + \| M_\var^{-1}\| \|\tilde{g}\|^2 \right].
\end{eqnarray*}
Thus,
$$\| B_\var^{-1} - M_\var^{-1}\| = \sup_{\|g\|=1} \left|\langle (B_\var^{-1}-M_\var^{-1}) g, g \rangle \right| \leq 
q(\var) \left[ C_1 \| B_\var^{-1} \|  + \| M_\var^{-1}\|  \right]
\leq  C q(\var),$$
for some~$C> 0$.
\end{proof}

\section{The model}\label{themodel}

\subsubsection*{ The geometry of the domain}
Let~$S$ be a circle of length $l > 0$ and  $r:S \rightarrow \mathbb R^3$ a closed and simple curve of class~$\CC^3$ in~$\mathbb R^3$ parameterized by its arc length parameter~$s$.  Just as in~\cite{BMT, OCA}, we assume that~$r(s)$ is endowed with the Frenet trihedron consisting  of orthogonal unit vectors  $\{T(s), N(s), B(s)\}$ satisfying the system of Frenet equations (as usual, we take the tangent, normal and binormal vectors). We denote by~$k(s)$ and~$\tau(s)$ the curvature and torsion, respectively,  of the curve~$r$ at the position~$r(s)$; due to continuity, such functions are bounded.

Let~$Q$ be a nonempty open, bounded, connected and simply connected  subset of~$\mathbb R^2$, and with a smooth boundary. The set
$$\Omega = \{x\in \mathbb R^3 : x = r(s) + y_2 N(s) + y_3 B(s), s \in S , y = (y_2,y_3)\in Q \}$$
is obtained by putting the region~$Q$ along the curve~$r(s)$.
In each point~$r(s)$ one also allows a rotation angle~$\alpha(s)$ of the cross-section~$Q$, and such rotation function is  supposed to be of class~$\CC^2$. Thus, the new region is given by
\[
\Omega^\alpha = \{ x \in \mathbb R^3 : x = r(s) + y_2 N_\alpha(s) + y_3 B_\alpha(s), s \in S , y = (y_2,y_3) \in Q \},
\]
where
\begin{eqnarray*}
N_\alpha(s) & := & \cos \alpha(s) N(s) + \sin \alpha(s) B(s),\\
B_\alpha(s) & := & -\sin \alpha(s) N(s) + \cos \alpha(s) B(s).
\end{eqnarray*}

Now, we add a small parameter $\varepsilon > 0$ to obtain the sequence of regions
$$\Omega^\alpha_\varepsilon = \{ x \in \mathbb R^3 : x = r(s) + \varepsilon y_2 N_\alpha(s) + \varepsilon y_3 B_\alpha(s), 
s \in S, y = (y_2,y_3) \in Q \},$$
which is ``squeezed''  to the curve~$r(s)$ as $\varepsilon \rightarrow 0$.

\subsubsection*{Quadratic forms}
As  mentioned in the introduction of this work, we consider the vector magnetic potential ${\bf A} = (A_1,A_2,A_3)$, where $A_j: \Omega \rightarrow \mathbb R$, $j=1,2,3$, are real functions, and the family of self-adjoint magnetic Schr\"odinger  operators
\begin{eqnarray*}
H_\varepsilon^\alpha \psi & = & \left(-i \partial_x - {\bf A}\right)^2 \psi \\
& =  & 
\left(-i \partial_{x_1} - A_1\right)^2 \psi + (-i \partial_{x_2} - A_2)^2 \psi + \left(-i\partial_{x_3} - A_3\right)^2 \psi,
\end{eqnarray*}
 $\dom H_\varepsilon^\alpha = \hil^2(\Omega^\alpha_\varepsilon)\cap \hil^1_0(\Omega^\alpha_\varepsilon)$. 
$\partial_{x_j}$ denotes the partial derivative with respect to the coordinate $x_j$, and so on.

We suppose that the vector field ${\bf A}$  is continuous on the reference curve~$S$, 
\begin{equation}\label{condA}
 A_j \in   W^{1,\infty}(\Omega),\quad j=1,2,3, 
\end{equation} and both restrictions $s\mapsto A_2(r(s),0,0), s\mapsto A_3(r(s),0,0)$ belong to $W^{2,\infty}(S)$  (the latter condition is due to the gauge transform~\eqref{nullpotentialcurve}). These relatively weak regularity of the magnetic potential is possible thanks to the technique of $\Gamma$-convergence employed here.

The family of quadratic forms associated with the operators~$H_\varepsilon^\alpha$ is given by
\begin{eqnarray*}\label{formasquadraticasaaa}
b_\varepsilon^\alpha(\psi) & := & 
\int_{\Omega^\alpha_\varepsilon}\left|(-i\partial_x-\bf{A})\psi\right|^2 {\mathrm d}x \\
& = & \int_{\Omega^\alpha_\varepsilon}\left(\left|(-i\partial_{x_1}-A_1)\psi\right|^2+\left|(-i\partial_{x_2}-A_2)\psi\right|^2+\left|(-i\partial_{x_3}-A_3)\psi\right|^2\right) {\mathrm d}x,
\end{eqnarray*}
with $\dom b_\varepsilon^\alpha = \hil^1_0(\Omega^\alpha_\varepsilon)$.

\subsubsection*{Change of variables}
Now we are going to perform a change of variables 
so that the integration region in~$b_\var^\alpha$, and consequently their domains, don't depend on the parameter $\varepsilon>0$. The change ahead is usual  and details will  be omitted; see~\cite{BMT, CRO, OCA}.

For each $\var > 0$ consider the function
$$
\begin{array}{lccl}
f^\alpha_\varepsilon: &  S \times Q    &    \rightarrow    &    \Omega^\alpha_\varepsilon\\
                    & (s,y_2,y_3)  &    \mapsto   &   r(s)+\varepsilon y_2N_\alpha(s)+\varepsilon y_3B_\alpha(s),
\end{array}
$$
and the unitary operator
$$
\begin{array}{cccc}
U^\alpha_\varepsilon:  &   \LL^2(\Omega^\alpha_\varepsilon)  &   \rightarrow  &   \LL^2(S \times Q,\beta_\varepsilon)\\
                       &             \psi                  &     \mapsto& \varepsilon \, \psi\circ f^\alpha_\varepsilon
\end{array}$$
where
$\beta_\var(s,y) := 1 - \var k(s) \langle z_\alpha, y\rangle$ and $z_\alpha: = (\cos \alpha, \sin \alpha)$
($\beta_\var$ comes from the Riemannian metric  defined by $f_\var^\alpha$, which is  a global diffeomorphism for~$\var> 0$ small enough).

We denote 
$$\left(\begin{array}{ccc}
\hat{A}_1^\var(s,y) \\
\hat{A}_2^\var(s,y) \\
\hat{A}_3^\var(s,y)
\end{array}\right) : = \left(\begin{array}{ccc}
T(s) \\
N(s) \\
B(s)
\end{array}\right)
\left(\begin{array}{ccc}
\left(A_1 \circ f_\var^\alpha\right)(s,y)  \\
\left(A_2 \circ f_\var^\alpha\right)(s,y) \\
\left(A_3 \circ f_\var^\alpha\right)(s,y)
\end{array}\right), 
$$
\begin{eqnarray*}
\tilde{A}_2^\varepsilon(s, y)  
& := &  
\varepsilon\cos\alpha(s)\hat{A}_2^\var(s,y) +\varepsilon\, \sin \alpha(s)\hat{A}_3^\var(s,y),\\
\tilde{A}_3^\varepsilon(s, y)
& := & -\varepsilon\,\sin \alpha(s)\hat{A}_2^\var(s,y)+\varepsilon\cos\alpha(s)\hat{A}_3^\var(s,y),
\end{eqnarray*}
and, for $\varphi \in \hil_0^1(S \times Q)$,
\begin{eqnarray*}
\tilde{\partial}_{y_2}\, \varphi & : = & (-i\partial_{y_2}- \tilde{A}_2^\varepsilon(s, y)) \, \varphi, \\
\tilde{\partial}_{y_3}\, \varphi & : = & (-i\partial_{y_3}- \tilde{A}_3^\varepsilon(s, y)) \, \varphi.
\end{eqnarray*}

From now on we  consider the sequence $b_\var^\alpha+c$, with $c > \| k^2/4\|_\infty$ (the reason for this choice will be clear ahead).
Some calculations show that the quadratic form $b_\var^\alpha+c$, 
after of the change given by $U_\var^\alpha$,
can be written as
\begin{eqnarray*}
\tilde{b}^{\alpha,c}_\varepsilon(\varphi)
& := &
\int_{S \times Q}
\frac{1}{\beta_\varepsilon}
\left| -i \frac{\partial \varphi}{\partial s} - \beta_\varepsilon \hat{A}_1^\var(s,y) \,\varphi -
\la \nabla_y \varphi,  Ry \ra (\tau+\alpha') \right|^2 {\mathrm d}s{\mathrm d}y\\
& + &
\int_{S \times Q}\left(
\frac{\beta_\varepsilon}{\varepsilon^2} | \tilde{\partial}_y \varphi|^2 + c \, \beta_\varepsilon |\varphi|^2 \right) {\mathrm d}s{\mathrm d}y,
\end{eqnarray*}
$\dom \tilde{b}^{\alpha,c}_\varepsilon=\hil^1_0(S \times Q)$, where
$\nabla_y := (-i \partial_{y_2}, -i \partial_{y_3})$,
$\tilde{\partial}_y := (\tilde{\partial}_{y_2}, \tilde{\partial}_{y_3})$ 
and
$R$ is the rotation matrix
$\left(
\begin{array}{cc}
	0&1\\ -1&0
\end{array}
\right).$

The domain of each  $\tilde{b}_\varepsilon^{\alpha, c}$  is the subspace $\hil^1_0(S \times Q)$ of the Hilbert space
$\LL^2(S \times Q,\beta_\varepsilon(s,y))$.
However, it is convenient to work in $\LL^2(S \times Q)$, that is, with the usual Lebesgue measure. Thus, we
consider the isometry
\begin{equation}\label{mudançavariavel}
\begin{array}{cccc}
V_\var^\alpha: & \LL^2(S \times Q)&\rightarrow&\LL^2(S \times Q,\beta_\epsilon)\\
                      &    v    &     \mapsto   &    \beta^{-1/2}_\varepsilon \, v
\end{array}.
\end{equation}

Now, for $v \in \hil_0^1(S \times Q)$, we denote 
$$\tilde{\partial}_s \, v : =   (-i\partial_s- \tilde{A}_1^\varepsilon(s, y)) \, v,$$
where
$$\tilde{A}_1^\varepsilon(s, y) 
:=  
\beta_\varepsilon(s,y) \hat{A}_1^\var(s,y) - \frac{i}{2\beta_\varepsilon} \frac{\partial \beta_\varepsilon}{\partial s}
- \frac{1}{2 \beta_\varepsilon} \la \nabla_y \beta_\varepsilon, R y \ra (\tau+\alpha')(s).$$

Applying the change of variables (\ref{mudançavariavel}) to the quadratic form $\tilde{b}_\varepsilon^{\alpha,c}$, it is found that
\begin{eqnarray*}
\tilde{g}^{\alpha,c}_\varepsilon(v) & = & 
\int_{S \times Q}  (1/\beta^2_\varepsilon) 
\left|  
\tilde{\partial}_s v - \la \nabla_y \psi , R  y \ra (\tau+\alpha')(s)\right|^2\\
& + & \frac{1}{\varepsilon^2} \int_{S \times Q}  |\tilde{\partial}_y v|^2  {\mathrm d}s{\mathrm d}y
- \int_{S \times Q}  (1/\beta^2_\varepsilon) \frac{k^2(s)}{4}|v|^2  {\mathrm d}s{\mathrm d}y \\
& + &
c \int_{S \times Q}  |v|^2  {\mathrm d}s{\mathrm d}y,
\end{eqnarray*}
where $\dom \tilde{g}^{\alpha,c}_\varepsilon = \hil^1_0(S \times Q)$
is now a subspace of $\LL^2(S \times Q)$.

Besides allowing us to working in the Hilbert space  $\LL^2(S \times Q)$ with the usual  measure, the unitary transformation~(\ref{mudançavariavel}) makes the curvature appears in the expression of the quadratic form (note the penultimate integral in the definition of $\tilde{g}_\varepsilon^{\alpha,c}$).
Now it is clear the role played by  the constant~$c > 0$: for $\var > 0$ small enough,
the quadratic forms $\tilde{g}_\varepsilon^{\alpha, c}(v)$ become positive.

\subsubsection*{Renormalization of $\tilde{g}^{\alpha, c}_\varepsilon$}
When the sequence of tubes is ``squeezed'' to the curve~$r(s)$, there are divergent eigenvalues  due to the factor 
\[
\frac1{\varepsilon^2} \int_{S \times Q} |\tilde{\partial}_y\psi|^2 \, {\mathrm d}s{\mathrm d}y,
\]
which is directly related with the magnetic Laplacian restricted to the cross section~$Q$. We renormalize this divergence by subtracting  
\begin{equation}\label{eqRenorSemMag}
\frac{\lambda_0}{\varepsilon^2}\int_{S \times Q} |\psi|^2 \, {\mathrm d}s{\mathrm d}y
\end{equation}  from $\tilde{g}_\varepsilon^{\alpha,c}(\psi)$. Recall that~$\lambda_0$ is the first eigenvalue of the Dirichlet Laplacian in~$Q$ (see (\ref{crosssectioneigenvalue}) in the introduction of this work). Thus, we pass to consider the  sequence of renormalized quadratic forms
\begin{eqnarray*}
g_\varepsilon^{\alpha, c}(v)   & : = & \tilde{g}_\varepsilon^{\alpha, c}(v) - \int_{S \times Q} \frac{\lambda_0}{\varepsilon^2} |v|^2  {\mathrm d}s{\mathrm d}y \\
& = &  
\int_{S \times Q}  (1/\beta^2_\varepsilon) 
\left|  
\tilde{\partial}_s v - \la \nabla_y v,  R \, y \ra(\tau+\alpha')(s)  \right|^2  {\mathrm d}s{\mathrm d}y \\
& + & \frac{1}{\varepsilon^2} \int_{S \times Q} \left( |\tilde{\partial}_y v|^2 - \lambda_0 |v|^2 \right)  {\mathrm d}s{\mathrm d}y
- \int_{S \times Q}  (1/\beta^2_\varepsilon) \frac{k^2(s)}{4}|v|^2  {\mathrm d}s{\mathrm d}y \\
& + &
c \int_{S \times Q}  |v|^2  {\mathrm d}s{\mathrm d}y,
\end{eqnarray*}
with $\dom g^{\alpha, c}_\varepsilon = \hil^1_0(S \times Q)$.

\subsubsection*{Gauge transform}
For the magnetic field ${\bf A}$ we are going to suppose, without loss of generality,
that
\begin{equation}\label{nullpotentialcurve}
\la N(s) , {\bf A}(r(s)) \ra =  \la B(s) , {\bf A}(r(s)) \ra =0, \quad \forall s \in S.
\end{equation} 
In fact,
let $\tilde{{\bf A}}_\var := (\tilde{A}_1^\var, \tilde{A}_2^\var, \tilde{A}_3^\var)$,
by using the gauge transform 
$$\tilde{{\bf A}}_\var \mapsto \tilde{{\bf A}}_\var - \nabla \Phi_\var,$$
with $\Phi_\var(s, y)= y_2 \tilde{A}_2^\var (s,0)  + y_3 \tilde{A}_3^\var (s,0)$,
we can suppose
$\tilde{A}_2^\var(s,0)= \tilde{A}_3^\var(s,0) = 0$, for all $s \in S$, which implies
the condition
(\ref{nullpotentialcurve}).
Due the periodicity (recall that~$r(s)$ is a closed curve),  usually the vector potential in one-dimensional effective operators can not be gauged away, as it happens  in case of unbounded tubes~\cite{DKRAY}.

Our study, in the next sections, will be conducted with the quadratic forms~$g_\varepsilon^{\alpha, c}(v)$; thus, for simplicity,
we shall omit the indices~$\alpha$ and~$c$ from the notations. For example, $g^{\alpha,c}_\varepsilon$  will be simply denoted  by~$g_\varepsilon$.

\section{The main theorem}\label{maintheorem}

In this section we present our application of the complex $\Gamma$-convergence discussed in Section~\ref{secIntro}. Recall that
$$g_\epsilon(v) =  + \infty \hspace{0.5cm} \hbox{for} \hspace{0.5cm} v\in \LL^2(S \times Q)\setminus \hil^1_0(S \times Q),$$
put
\begin{equation}\label{constC}
C(Q):= \int_Q \left| \la \nabla_y  u_0, Ry \ra \right|^2\,{\mathrm d}y =  \int_Q \left| (y_2\partial_{y_3} - y_3\partial_{y_2})  u_0 \right|^2\,{\mathrm d}y,
\end{equation}
and recall that~$u_0$ is a normalized eigenfunction  corresponding to the first eigenvalue~$\lambda_0$ of the Dirichlet Laplacian in~$Q$. Note that~$C(Q)$ depends only on the cross section~$Q$. 

Consider  the one-dimensional quadratic form 
\begin{eqnarray*}
g_0(w) & := & \int_{S}\left| \left(  -i  \partial_s - \la {\bf A}(r(s)), T(s) \ra   \right) w(s) \right|^2 {\mathrm d}s\\
& + &
\int_{S } \left[C(Q)(\tau+\alpha')^2(s)-\frac{k^2(s)}{4}+c\right]| w(s)|^2 {\mathrm d}s,
\end{eqnarray*}
with $\dom g_0 = \hil^1(S)$. From  $g_0(w)$ we define a quadratic form on $\LL^2(S \times Q)$:
\begin{eqnarray}\label{formaquadraticalimite}
g(v)=\left\{
\begin{array}{cc}
g_0(w) & \hbox{if} \hspace{0.5cm} v = w u_0 \hspace{0.5cm} \hbox{with} \hspace{0.5cm} w \in \dom g_0 \\
+ \infty &  \hbox{otherwise} 
\end{array}.
\right.
\end{eqnarray}

We denote by $G_\varepsilon$ and ~$G$  the respective self-adjoint operators associated with the quadratic forms $g_\varepsilon$ and~$g$.  Our application is the following theorem.

\begin{Theorem}\label{teoremaprincipal}
Under condition (\ref{nullpotentialcurve}), the sequence of operators  $G_\varepsilon$ converges to~$G$ in the norm convergence sense as~$\varepsilon \to 0$. More exactly,
$$\left \| (G_\var - i \, {\bf I})^{-1} - (G-i \, {\bf I})^{-1} \right\| \too 0$$
as $\var \too 0$.
\end{Theorem}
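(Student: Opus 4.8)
The plan is to deduce Theorem~\ref{teoremaprincipal} from the abstract machinery of Section~\ref{secIntro}, specifically from Proposition~\ref{GammaNorm}: it suffices to verify its three hypotheses for the forms $g_\varepsilon$ (which are uniformly lower bounded, since we added the constant $c>\|k^2/4\|_\infty$ and renormalized by $\lambda_0/\varepsilon^2$, the latter being harmless on $\hil_0^1$ by the variational characterization of $\lambda_0$) and the limit form $g$, whose associated operator $G$ acts on $\hil_0 = \{w u_0 : w \in \hil^1(S)\} \cong \LL^2(S)$ and is unitarily equivalent to $G_0$. The norm-resolvent statement with the shift $(\cdot - i\,\mathbf{I})^{-1}$ then follows from the one with $(\cdot + \lambda)^{-1}$ by the standard resolvent identity, since the operators are self-adjoint and bounded below.

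\textbf{The $\Gamma$-convergence step (hypothesis a).} I would prove $g_\varepsilon \sgconv g$ and $g_\varepsilon \wgconv g$. For the $\liminf$ inequality (condition i) of Definition~\ref{GammaAltern}, take $v_\varepsilon \rightharpoonup v$ (or $\to v$) with $\liminf g_\varepsilon(v_\varepsilon) < \infty$; then $\frac{1}{\varepsilon^2}\int(|\tilde\partial_y v_\varepsilon|^2 - \lambda_0|v_\varepsilon|^2)$ stays bounded, and since $\tilde A_j^\varepsilon = O(\varepsilon)$ this forces the transverse part of $v_\varepsilon$ to collapse onto the ground state fibre, i.e.\ $v_\varepsilon \to w u_0$ with the transverse correction vanishing in $\hil^1$; writing $v_\varepsilon = w_\varepsilon u_0 + \eta_\varepsilon$ with $\eta_\varepsilon \perp u_0$ in each fibre and using the spectral gap $\lambda_1 - \lambda_0 > 0$ one controls $\|\eta_\varepsilon\|$ and $\|\nabla_y\eta_\varepsilon\|$. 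The remaining longitudinal terms converge (using $\beta_\varepsilon \to 1$, $\hat A_1^\varepsilon(s,y)\to \langle{\bf A}(r(s)),T(s)\rangle$ by continuity of ${\bf A}$ on $S$ and the gauge condition~\eqref{nullpotentialcurve}, and $\int_Q|\langle\nabla_y u_0, Ry\rangle|^2 = C(Q)$) to $g_0(w)$, with lower semicontinuity handled by weak lower semicontinuity of the convex longitudinal quadratic functional. For the recovery sequence (condition ii), given $v = w u_0$ with $w \in \hil^1(S)$ one takes the constant sequence $v_\varepsilon = w u_0$ (or a mild $\varepsilon$-dependent corrector to absorb the $\tilde A_1^\varepsilon$ terms involving $\partial_s\beta_\varepsilon$), and checks $g_\varepsilon(v_\varepsilon) \to g_0(w)$ by dominated convergence; if $v \notin \hil_0$ there is nothing to prove since $g(v) = \infty$ is automatically $\ge$ any finite limit, while $g_\varepsilon$ restricted to such directions still has recovery sequences built by first projecting onto $\hil_0$.

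\textbf{Compactness (hypotheses b and c).} For hypothesis~b), $R_{-\lambda}(G)$ is compact on $\hil_0 \cong \LL^2(S)$ because $G$ (equivalently $G_0$) is a one-dimensional Schrödinger-type operator on the compact manifold $S$ with $\hil^1(S) \hookrightarrow \LL^2(S)$ compact. For hypothesis~c), take $\mathcal K = \hil_0^1(S\times Q)$, which embeds compactly in $\LL^2(S\times Q)$ by Rellich: if $(\psi_\varepsilon)$ is bounded in $\LL^2$ with $g_\varepsilon(\psi_\varepsilon)$ bounded, then from the form expression the longitudinal gradient $\|\partial_s \psi_\varepsilon\|$ is bounded (the magnetic and curvature terms are bounded perturbations, absorbed since $c$ was chosen large) and $\frac{1}{\varepsilon^2}\int(|\tilde\partial_y\psi_\varepsilon|^2 - \lambda_0|\psi_\varepsilon|^2)$ bounded gives $\|\nabla_y\psi_\varepsilon\|$ bounded via $\int|\nabla_y\psi|^2 \ge \lambda_0\int|\psi|^2$ and a Poincaré-type estimate on the transverse fibre; hence $(\psi_\varepsilon)$ is bounded in $\hil^1(S\times Q)$, and being in $\hil_0^1$ by construction, bounded in $\mathcal K$.

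\textbf{Main obstacle.} The delicate point is the $\liminf$ inequality in the $\Gamma$-convergence: one must rule out loss of mass or energy in the transverse direction and show that a weakly convergent sequence of finite-energy states is asymptotically a product $w u_0$ in a strong enough sense that the longitudinal magnetic and curvature terms pass to the limit correctly — in particular that the cross term coupling $\tilde\partial_s v_\varepsilon$ to $\langle\nabla_y v_\varepsilon, Ry\rangle(\tau+\alpha')$ produces exactly the $C(Q)(\tau+\alpha')^2$ term and no spurious contribution, which relies on $\int_Q u_0 \langle\nabla_y u_0, Ry\rangle\,dy = 0$ (a parity/integration-by-parts identity for the real ground state). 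Carefully tracking the $O(\varepsilon)$ size of the transverse magnetic potentials $\tilde A_2^\varepsilon, \tilde A_3^\varepsilon$ and the $W^{2,\infty}(S)$ regularity needed for the gauge-corrected $\tilde A_1^\varepsilon$ (so that $\partial_s\beta_\varepsilon$ and the extra first-order terms are controlled) is where the relatively weak regularity hypothesis~\eqref{condA} is genuinely used.
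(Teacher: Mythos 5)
Your proposal is correct and follows essentially the same route as the paper: it verifies the three hypotheses of Proposition~\ref{GammaNorm}, with the liminf inequality forcing collapse onto the ground-state fibre via the identity $\int_Q u_0\la\nabla_y u_0,Ry\ra\,{\mathrm d}y=0$, the constant sequence $w u_0$ as recovery sequence (using the gauge condition~\eqref{nullpotentialcurve} to kill the transverse magnetic terms), and Rellich--Kondrachov both to reduce weak to strong $\Gamma$-convergence and to check hypotheses~b) and~c). The only minor differences are stylistic: the paper identifies the transverse limit through the simplicity of $\lambda_0$ applied to the weak limit (Lemma~\ref{lema02teorema}) rather than your fibrewise decomposition with a spectral-gap estimate, and for $v\notin\{wu_0\}$ it shows that \emph{every} approximating sequence has $g_\varepsilon(v_\varepsilon)\to\infty$ (which is what condition~ii) actually requires there, rather than ``projecting onto $\hil_0$'').
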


Through the unitary transformation $v(x,y)=w(s)u_0(y)\mapsto w(s)$, $G$ can be identified with the one-dimensional operator
$$(G_0 w)(s) := (-i \partial_s - \la {\bf A}(r(s)),  T(s)\ra)^2 w(s) + \left[C(Q) (\tau + \alpha')^2(s)
-\frac{k^2(s)}{4} + c \right]w(s),$$
with $\dom G_0 = \hil^2(S ).$
Due to this identification, we say that there was a  ``reduction of dimension'' 
in the limit~$\varepsilon \to 0$.
We can also note the presence of a  potential in~$G_0$ that came from the original magnetic potential, as well as a term that depends on geometric  effects of the original region, as habitually is the case in such kind of problems without magnetic fields.

To prove  Theorem~\ref{teoremaprincipal} we are going to apply  Proposition~\ref{GammaNorm} of  Section~\ref{secIntro}.
The first step is to show that
the sequence  $g_\varepsilon$ strongly $\Gamma$-converges to~$g$.
Observe that in our case the strong and weak $\Gamma$-convergences are equivalent because the region $S \times Q$ is bounded.

Some important properties that we will use ahead are as follows:

\begin{itemize}
\item[\bf{(1)}] 
For all function $v \in \hil_0^1(S \times Q)$, we have
\begin{equation}\label{firsteigendiclet}
\int_{Q} \left( |\nabla_y v(s, y)|^2 - \lambda_0 |v(s, y)|^2 \right) \, {\mathrm d}y 
\geq 0, \qquad \hbox{a.e.[s]},
\end{equation}
and
\begin{equation}\label{desigualdadediamagnetica}
\int_{Q} \left( |\tilde{\partial}_y v(s, y)|^2 - \lambda_0 |v(s, y)|^2 \right) \, {\mathrm d}y 
\geq 0, \qquad \hbox{a.e.[s]}.
\end{equation}
The first inequality follows from the definition of $\lambda_0$ and the second one is a consequence of (\ref{firsteigendiclet}) combined with the Diamagnetic Inequality~\cite{Diamag}.

\item[\bf{(2)}] The Dirichlet condition on the boundary $\partial Q$ implies $\displaystyle\int_Q \nabla_y|u_0|^2\,{\mathrm d}y=0$, which, by its turn, implies that
$$
\int_Q \la u_0\nabla_y u_0,  Ry \ra\,{\mathrm d}y=0. 
$$
\end{itemize}

To show the strong  $\Gamma$-convergence of the sequence $g_\varepsilon$  we will make use of some lemmas.
\begin{Lemma}\label{lema01teorema}
If $v_\varepsilon \rightharpoonup v$ in $\LL^2(S \times Q)$ and 
$(g_\varepsilon(v_\varepsilon))_\varepsilon$ is
a bounded sequence in $\LL^2(S \times Q)$, then $(\partial_s v_\varepsilon)_\varepsilon$ and $(\nabla_y v_\varepsilon)_\varepsilon$ 
are bounded sequences in $\LL^2(S \times Q)$. 
Furthermore, $\partial_s v_\varepsilon \rightharpoonup \partial_s v,\nabla_y v_\varepsilon \rightharpoonup \nabla_y v$ 
in $\LL^2(S \times Q)$ and $v\in \hil^1_0(S \times Q)$.
\end{Lemma}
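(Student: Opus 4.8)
The plan is to extract uniform $\hil^1$-bounds from the hypothesis $(g_\varepsilon(v_\varepsilon))_\varepsilon$ bounded, and then promote weak $\LL^2$-convergence of the $v_\varepsilon$ to weak $\LL^2$-convergence of their derivatives. First I would recall the explicit expression for $g_\varepsilon(v)$ derived in Section~\ref{themodel}: it splits into a longitudinal term $\int (1/\beta_\varepsilon^2)|\tilde\partial_s v - \la\nabla_y v, Ry\ra(\tau+\alpha')|^2$, a transverse term $\frac{1}{\varepsilon^2}\int(|\tilde\partial_y v|^2 - \lambda_0|v|^2)$, a curvature term $-\int(1/\beta_\varepsilon^2)\frac{k^2}{4}|v|^2$, and $c\int|v|^2$. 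Since $v_\varepsilon\wseta v$ in $\LL^2$, the sequence $(\|v_\varepsilon\|)_\varepsilon$ is bounded, so the last two terms are bounded in absolute value. By property~(1), the transverse term is $\geq 0$; hence from boundedness of $g_\varepsilon(v_\varepsilon)$ and of $\|v_\varepsilon\|$ we deduce that the longitudinal term is bounded and, separately, that $\frac1{\varepsilon^2}\int(|\tilde\partial_y v_\varepsilon|^2-\lambda_0|v_\varepsilon|^2)$ is bounded (so in particular $\int|\tilde\partial_y v_\varepsilon|^2 = O(\varepsilon^2)$, hence $\nabla_y v_\varepsilon\to 0$ strongly once the bounded magnetic potential $\tilde A_j^\varepsilon = O(\varepsilon)$ is accounted for via the triangle inequality — this gives $(\nabla_y v_\varepsilon)_\varepsilon$ bounded, indeed vanishing).

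Next I would handle $(\partial_s v_\varepsilon)_\varepsilon$. From $\beta_\varepsilon\to 1$ uniformly (for $\varepsilon$ small, $\beta_\varepsilon$ is bounded above and below away from $0$), boundedness of the longitudinal term gives $\|\tilde\partial_s v_\varepsilon - \la\nabla_y v_\varepsilon, Ry\ra(\tau+\alpha')\|$ bounded; since $\nabla_y v_\varepsilon$ is bounded in $\LL^2$ and $(\tau+\alpha')$ is bounded, $\|\tilde\partial_s v_\varepsilon\|$ is bounded, and then $\|\partial_s v_\varepsilon\|$ is bounded because $\tilde A_1^\varepsilon$ is a bounded function (here I would invoke the regularity $A_j\in W^{1,\infty}(\Omega)$ from~\eqref{condA}, so that $\hat A_1^\varepsilon$, $\partial_s\beta_\varepsilon/\beta_\varepsilon$, and $\la\nabla_y\beta_\varepsilon,Ry\ra(\tau+\alpha')$ are all in $\LL^\infty(S\times Q)$, uniformly in $\varepsilon$). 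Thus $(v_\varepsilon)_\varepsilon$ is bounded in $\hil^1(S\times Q)$. Since $\hil^1_0(S\times Q)$ is a closed subspace of $\hil^1$ and each $v_\varepsilon\in\hil^1_0$, a bounded sequence there has a weakly convergent subsequence with limit in $\hil^1_0$; combined with $v_\varepsilon\wseta v$ in $\LL^2$ and uniqueness of weak limits, the full sequence satisfies $v_\varepsilon\wseta v$ in $\hil^1$, whence $\partial_s v_\varepsilon\wseta\partial_s v$ and $\nabla_y v_\varepsilon\wseta\nabla_y v$ in $\LL^2(S\times Q)$ and $v\in\hil^1_0(S\times Q)$. (The subsequence-uniqueness argument is standard: every subsequence of the bounded $\hil^1$-sequence has a further subsequence converging weakly in $\hil^1$, necessarily to $v$ since it converges weakly to $v$ in the coarser $\LL^2$ topology, so the whole sequence converges weakly in $\hil^1$.)

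The main obstacle is bookkeeping around the magnetic potential terms: one must be careful that the ``tilde'' derivatives $\tilde\partial_s$, $\tilde\partial_y$ differ from the plain derivatives $\partial_s$, $\nabla_y$ only by multiplication by uniformly bounded functions of order $O(1)$ and $O(\varepsilon)$ respectively, which is exactly what the hypotheses~\eqref{condA} guarantee; and that the cross term $\la\nabla_y v_\varepsilon,Ry\ra(\tau+\alpha')$ appearing inside the longitudinal modulus can be absorbed using the already-established $\LL^2$-bound on $\nabla_y v_\varepsilon$. No single step is deep, but the constants must be tracked uniformly in $\varepsilon$, and one should note that positivity of the transverse term (property~(1), via the diamagnetic inequality) is what prevents cancellation from masking the longitudinal bound.
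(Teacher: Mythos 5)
Your overall strategy coincides with the paper's: bound the longitudinal term using positivity of the transverse term (property {\bf(1)}), recover a bound on $\int|\tilde\partial_y v_\varepsilon|^2$ by re-adding $\lambda_0\int|v_\varepsilon|^2$, strip off the uniformly bounded magnetic corrections $\tilde A_j^\varepsilon$ to get an $\hil^1$-bound, and conclude by reflexivity and uniqueness of weak limits. The handling of $\tilde A_1^\varepsilon$ via \eqref{condA} and the final subsequence--uniqueness argument are fine (indeed slightly more careful than the paper's, which extracts a single subsequence).

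There is, however, one genuinely false step. From the boundedness of $\frac{1}{\varepsilon^2}\int(|\tilde\partial_y v_\varepsilon|^2-\lambda_0|v_\varepsilon|^2)$ you conclude ``in particular $\int|\tilde\partial_y v_\varepsilon|^2=O(\varepsilon^2)$, hence $\nabla_y v_\varepsilon\to0$ strongly.'' What actually follows is
\[
\int_{S\times Q}|\tilde\partial_y v_\varepsilon|^2\,{\mathrm d}s\,{\mathrm d}y \;\le\; D\varepsilon^2+\lambda_0\int_{S\times Q}|v_\varepsilon|^2\,{\mathrm d}s\,{\mathrm d}y,
\]
which is bounded (this is exactly the estimate the paper uses) but not $O(\varepsilon^2)$: you have dropped the term $\lambda_0\|v_\varepsilon\|^2$, which does not vanish. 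The stronger claim is not only unjustified but false in general: it would force $\nabla_y v=0$ for the weak limit, contradicting both the conclusion of Lemma~\ref{lema01teorema} and Lemma~\ref{lema02teorema}, where $v=wu_0$ and $\nabla_y v=w\nabla_y u_0\ne0$. Fortunately nothing downstream in your argument uses the vanishing, only the boundedness, so the proof is repaired by deleting that parenthetical claim and keeping the correct estimate displayed above; with that correction your argument is essentially identical to the paper's.
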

\begin{proof}
As $(g_\varepsilon(v_\varepsilon))_\varepsilon$ is a bounded sequence, there exists a number~$D>0$ so that
\begin{eqnarray*}\label{e2}
& &
\limsup_{\varepsilon \rightarrow 0} \int_{S \times Q} \frac{1}{\beta_\varepsilon^2}  
\left| \tilde{\partial}_s v_\varepsilon  -
\la \nabla_y v_\varepsilon, Ry \ra (\tau+\alpha')(s) \right|^2  {\mathrm d}s\,{\mathrm d}y \\
&\leq &
\limsup_{\varepsilon\rightarrow0} g_\varepsilon(v_\varepsilon) \leq  D.
\end{eqnarray*}

Now, since $(v_\varepsilon)_\varepsilon$ is also a bounded sequence, we have
\begin{eqnarray*}\label{e3}
& &
\limsup_{\varepsilon\rightarrow0}  \int_{S \times Q} 
\left|\tilde{\partial}_y v_\varepsilon\right|^2 {\mathrm d}s{\mathrm d}y  \\
& = &
\limsup_{\varepsilon\rightarrow0}  
\left(\int_{S \times Q} \left(\left|\tilde{\partial}_y v_\varepsilon\right|^2 - \lambda_0
|v_\varepsilon|^2\right) {\mathrm d}s{\mathrm d}y +  \int_{S \times Q} \lambda_0 |v_\varepsilon|^2  \, {\mathrm d}s{\mathrm d}y\right) \\
&\leq & 
\limsup_{\varepsilon\rightarrow0} \, D \varepsilon^2 + \limsup_{\varepsilon\rightarrow0}  \int_{S \times Q}
\lambda_0 |v_\varepsilon|^2 {\mathrm d}s\, {\mathrm d}y< \infty.
\end{eqnarray*}

We have shown above that  $((-i\partial_{y_2}-\tilde{A}_2^\var)v_\varepsilon)_\varepsilon$ and $((-i\partial_{y_3}-\tilde{A}_3^\var)v_\varepsilon)_\varepsilon$ are bounded sequences in $\LL^2(S \times Q)$.  Since $(\tilde{A}_2^\var v_\varepsilon)$ and $(\tilde{A}_3^\var v_\varepsilon)$ are also bounded sequences in $\LL^2(S \times Q)$, we conclude that $(\nabla_y v_\varepsilon)_\varepsilon$ is a bounded sequence in $\LL^2(S \times Q)$. Similarly, $(\partial_s v_\varepsilon)_\varepsilon$ is a bounded sequence in $\LL^2(S \times Q)$.  Therefore, $(v_\varepsilon)_\varepsilon$ is a bounded sequence in $\hil_0^1(S \times Q)$. Thus, there exist $\phi \in \hil_0^1(S \times Q)$ and a subsequence of $(v_\varepsilon)_\varepsilon$, also denoted by $(v_\varepsilon)_\varepsilon$, so that $v_\var \rightharpoonup \phi$ in $\hil_0^1(S \times Q)$ (recall that Hilbert spaces are reflexive). As $v_\varepsilon \rightharpoonup v$ in $\LL^2(S \times Q)$, it follows that~$v=\phi$, $\partial_s v_\varepsilon \rightharpoonup \partial_s v,\nabla_y v_\varepsilon \rightharpoonup \nabla_y v$ in $\LL^2(S \times Q)$ and $v\in \hil^1_0(S \times Q)$.
\end{proof}

\begin{Lemma}\label{lema02teorema}
Let $v_\varepsilon\rightarrow v$ be in $\LL^2(S \times Q)$ so that  there exists $\displaystyle\lim_{\varepsilon\rightarrow0}g_\varepsilon(v_\varepsilon)<\infty$. 
Then, we can write $v(s,y)=w(s)u_0(y)$ with $w\in \hil^1(S )$.
\end{Lemma}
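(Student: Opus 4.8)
The plan is to show that, in the limit $\varepsilon\to 0$, the divergent renormalized transverse term forces the limit function $v$ to be, for a.e.\ $s$, a multiple of the ground state $u_0$ of the Dirichlet Laplacian on $Q$; the one-dimensional coefficient of that multiple is then the desired $w$. First I would note that since $v_\varepsilon\to v$ strongly in $\LL^2(S\times Q)$ and $\lim_\varepsilon g_\varepsilon(v_\varepsilon)<\infty$, in particular $(g_\varepsilon(v_\varepsilon))_\varepsilon$ is bounded, so Lemma~\ref{lema01teorema} applies: $v\in\hil^1_0(S\times Q)$ and $\nabla_y v_\varepsilon\rightharpoonup\nabla_y v$, $\partial_s v_\varepsilon\rightharpoonup\partial_s v$ weakly in $\LL^2(S\times Q)$. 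Using the Diamagnetic Inequality together with \eqref{firsteigendiclet} as in property~\textbf{(1)}, the first, third and fourth integrals defining $g_\varepsilon(v_\varepsilon)$ are bounded (the third and fourth are manifestly $\LL^2$-bounded, the first is nonnegative), hence
\[
\frac1{\varepsilon^2}\int_{S\times Q}\left(|\tilde\partial_y v_\varepsilon|^2-\lambda_0|v_\varepsilon|^2\right)\,{\mathrm d}s\,{\mathrm d}y \;=\; g_\varepsilon(v_\varepsilon) - (\text{bounded terms})
\]
stays bounded; since the bracket is $\geq 0$ a.e.$[s]$ by \textbf{(1)}, multiplying back by $\varepsilon^2$ gives
\[
\int_{S\times Q}\left(|\tilde\partial_y v_\varepsilon|^2-\lambda_0|v_\varepsilon|^2\right)\,{\mathrm d}s\,{\mathrm d}y \;\longrightarrow\; 0 .
\]

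Next I would pass from the magnetic transverse gradient $\tilde\partial_y$ to the plain gradient $\nabla_y$. Because $\tilde A_2^\varepsilon,\tilde A_3^\varepsilon$ carry an explicit factor $\varepsilon$ (see their definitions) and $A_2,A_3\in W^{1,\infty}$, one has $\|\tilde A_j^\varepsilon\|_\infty = O(\varepsilon)\to 0$; combined with the $\LL^2$-boundedness of $v_\varepsilon$ and of $\nabla_y v_\varepsilon$ (Lemma~\ref{lema01teorema}), expanding $|\tilde\partial_y v_\varepsilon|^2 = |\nabla_y v_\varepsilon|^2 - 2\,\mathrm{Re}\,\langle \nabla_y v_\varepsilon, \tilde{\mathbf A}^\varepsilon_y v_\varepsilon\rangle + |\tilde{\mathbf A}^\varepsilon_y|^2|v_\varepsilon|^2$ shows the cross and quadratic terms vanish in the limit, so
\[
\int_{S\times Q}\left(|\nabla_y v_\varepsilon|^2-\lambda_0|v_\varepsilon|^2\right)\,{\mathrm d}s\,{\mathrm d}y \;\longrightarrow\; 0 .
\]
Now weak lower semicontinuity of $v\mapsto\int|\nabla_y v|^2$ under $v_\varepsilon\rightharpoonup v$ in $\hil^1_0$, together with $v_\varepsilon\to v$ in $\LL^2$ for the mass term, yields
\[
\int_{S\times Q}\left(|\nabla_y v|^2-\lambda_0|v|^2\right)\,{\mathrm d}s\,{\mathrm d}y \;\leq\; 0 ,
\]
while by \eqref{firsteigendiclet} the integrand's $s$-fibre integral is $\geq 0$ a.e., forcing $\int_Q(|\nabla_y v(s,\cdot)|^2-\lambda_0|v(s,\cdot)|^2)\,{\mathrm d}y = 0$ for a.e.\ $s$. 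For such $s$, $v(s,\cdot)\in\hil^1_0(Q)$ is a minimizer of the Rayleigh quotient, hence an eigenfunction for $\lambda_0$; since $\lambda_0$ is \emph{simple} with normalized eigenfunction $u_0$, there is a scalar $w(s)\in\C$ with $v(s,\cdot)=w(s)u_0$. Define $w(s):=\langle v(s,\cdot),u_0\rangle_{\LL^2(Q)}$; then $v(s,y)=w(s)u_0(y)$ a.e.

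Finally I would check $w\in\hil^1(S)$. Since $v\in\hil^1_0(S\times Q)$ we have $\partial_s v\in\LL^2(S\times Q)$, and $w(s)=\langle v(s,\cdot),u_0\rangle$ is obtained by integrating against the fixed function $u_0\in\LL^2(Q)$; differentiating under the integral (justified because $v\in\hil^1$ in the $s$ variable with values in $\LL^2(Q)$) gives $w'(s)=\langle \partial_s v(s,\cdot),u_0\rangle\in\LL^2(S)$ by Cauchy--Schwarz and Fubini, and $w\in\LL^2(S)$ likewise; no boundary condition is imposed because $S$ is a circle. Hence $w\in\hil^1(S)$, completing the proof. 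The main obstacle I anticipate is the second paragraph: carefully controlling the passage from $\tilde\partial_y$ to $\nabla_y$ and invoking weak lower semicontinuity in the right space so that the ``squeezed'' identity $\int(|\nabla_y v|^2-\lambda_0|v|^2)=0$ survives the limit with the correct inequality direction; everything else is either a direct citation of Lemma~\ref{lema01teorema} and property~\textbf{(1)}, or the elementary fibrewise eigenvalue argument using simplicity of $\lambda_0$.
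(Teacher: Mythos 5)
Your proposal is correct and follows essentially the same route as the paper: extract the bound $\int(|\tilde\partial_y v_\varepsilon|^2-\lambda_0|v_\varepsilon|^2)\le D\varepsilon^2$ from the boundedness of $g_\varepsilon(v_\varepsilon)$, pass to the limit via weak lower semicontinuity to get $\int(|\nabla_y v|^2-\lambda_0|v|^2)=0$, and conclude fibrewise from the simplicity of $\lambda_0$. The only cosmetic difference is that the paper applies lower semicontinuity directly to $\tilde\partial_y v_\varepsilon\rightharpoonup\nabla_y v$ rather than first stripping the $O(\varepsilon)$ magnetic terms, and your explicit verification that $w\in\hil^1(S)$ is slightly more detailed than the paper's one-line assertion.
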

\begin{proof}
In fact, by previous lemma, $\nabla_y v_\varepsilon\rightharpoonup\nabla_y v$ weakly in $\LL^2(S \times Q)$.  Observe that also $\tilde{A}_2^\var v_\varepsilon\rightharpoonup0$ and $\tilde{A}_3^\var v_\varepsilon\rightharpoonup0$ weakly. Thus, $\tilde{\partial}_y v_\varepsilon\rightharpoonup \nabla_y v$ weakly in $\LL^2(S \times Q)$.

From the strong convergence of~$(v_\varepsilon)_\varepsilon$  we have
\begin{eqnarray*}
\int_{S \times Q}
\left|\nabla_y v\right|^2  {\mathrm d}s{\mathrm d}y 
& \leq & 
\liminf_{\varepsilon \rightarrow 0}\int_{S \times Q} \left|\tilde{\partial}_y v_\varepsilon\right|^2  {\mathrm d}s{\mathrm d}y \\
& \leq & 
\limsup_{\varepsilon \rightarrow 0} \int_{S \times Q} \lambda_0 |v_\varepsilon|^2  {\mathrm d}s{\mathrm d}y\\
& = &
\lambda_0 \int_{S \times Q}|v|^2  {\mathrm d}s{\mathrm d}y.
\end{eqnarray*}

This fact, combined with (\ref{firsteigendiclet}) above, tell us that
$$
\int_{S \times Q}\left(\left|\nabla_y v\right|^2-\lambda_0|v|^2\right)  {\mathrm d}s{\mathrm d}y=0. $$
 
Consider $\displaystyle f(s):=\int_Q\left(\left|\nabla_y v(s,y)\right|^2 - \lambda_0|v(s,y)|^2\right) {\mathrm d}y$.  Since $f(s) \geq 0$, the inequality above implies that $f=0$ a.e.[s]. Therefore, $v(s,y)$ is an eigenfunction associated with~$\lambda_0$.  As~$\lambda_0$ is a simple eigenvalue,  $v(s,y)$ is proportional to $u_0$.  Thus, we can write  $v(s,y)=w(s)u_0(y)$ with $w\in \hil^1(S)$ (since $v\in \hil^1_0(S \times Q)$).
\end{proof}

\vspace{0.5cm}
\noindent
{\bf Proof of Theorem~\ref{teoremaprincipal}:}
Let $v \in \LL^2(S \times Q)$ and 
$v_\varepsilon \too v$ in $\LL^2(S  \times Q)$. 
We are going to show that
$\lim_{\varepsilon \too 0} g_\varepsilon(v_\varepsilon) \geq g(v)$.
If $\displaystyle\liminf_{\varepsilon \rightarrow 0}g_\varepsilon(v_\varepsilon)=\infty$,  then $\displaystyle\liminf_{\varepsilon \rightarrow 0} g_\varepsilon(v_\varepsilon)\geq g(v)$. 
Suppose now that $\displaystyle\liminf_{\varepsilon \rightarrow0} g_\varepsilon(v_\varepsilon)<\infty$. 
Passing to a subsequence, if necessary, we can suppose $\displaystyle\liminf_{\varepsilon \rightarrow 0}g_\varepsilon(v_\varepsilon)=
\lim_{\varepsilon \rightarrow 0}g_\varepsilon(v_\varepsilon)<\infty$.

Lemma~\ref{lema01teorema} ensures that $\partial_s v_\varepsilon \rightharpoonup \partial_s v$ and
$\nabla_y v_\varepsilon \rightharpoonup  \nabla_y v$ weakly in $\LL^2(S \times Q)$. As $\tilde{A}_1^\varepsilon(s,y) \rightarrow \la {\bf A}(r(s)), T(s) \ra$ uniformly,  we have 
$$\tilde{\partial}_s v_\varepsilon \rightharpoonup(-i\partial_s- \la {\bf A}(r(s)), T(s) \ra)v$$ 
weakly in $\LL^2(S \times Q)$.
By recalling that $(\tau+\alpha')\in \LL^\infty(S)$, we have
$$ \tilde{\partial}_s v_\varepsilon -  \la \nabla_y v_\varepsilon,  Ry \ra (\tau+\alpha')(s) 
\rightharpoonup (-i\partial_s - \la {\bf A}(r(s)), T(s)\ra )v - \la \nabla_y v, Ry  \ra (\tau+\alpha')(s)$$ 
in $\LL^2(S \times Q)$. By Lemma~\ref{lema01teorema},
$v\in H^1_0(S \times Q)$ and we can write
$v(s,y)=w(s)u_0(y)$ with $\omega\in \hil^1(S)$. 

The  above remarks, together with properties~{\bf (1)} and~{\bf (2)}, show that
\begin{eqnarray*}
\liminf_{\varepsilon\rightarrow0}g_\varepsilon(v_\varepsilon)
& \geq  &
\liminf_{\varepsilon \rightarrow 0}
\int_{S \times Q}
 \frac{1}{\beta^2_\varepsilon}  
\left| \tilde{\partial}_s v_\varepsilon  - \la \nabla_y v_\varepsilon,  R y \ra (\tau+\alpha')(s) \right|^2 {\mathrm d}s{\mathrm d}y \\
& + &
\liminf_{\varepsilon \too 0} \int_{S \times Q} \frac{k^2(s)}{4 \beta_\varepsilon^2} |v_\var|^2 {\mathrm d}s{\mathrm d}y + 
\liminf_{\varepsilon \too 0} \int_{S \times Q} c |v_\var|^2 {\mathrm d}s{\mathrm d}y \\
& \geq & 
\int_{S \times Q} \left|(-i\partial_s - \la {\bf A}(r(s)), T(s) \ra)v 
- \la \nabla_y v,  Ry \ra (\tau+\alpha')(s) \right|^2  {\mathrm d}s{\mathrm d}y\\
& + &
\int_{S \times Q} \left( c - \frac{k^2(s)}{4}\right)|v|^2  {\mathrm d}s{\mathrm d}y \\
& = &
\int_{S} \left\{
\left|(-i\partial_s- \la {\bf A}(r(s)), T(s) \ra) w \right|^2 + 
\left[C(Q) (\tau+\alpha')^2(s) - \frac{k^2(s)}{4} + c\right]| w|^2 \right\} {\mathrm d}s\\
& = &
g_0(\omega)=g(v).
\end{eqnarray*}

Now we are going to show that   for each $v \in \LL^2(S \times Q)$ there exists a sequence  $(v_\varepsilon)_\varepsilon$ in $\LL^2(S \times Q)$ so that $v_\varepsilon  \too v$ in $\LL^2(S \times Q)$ and $\lim_{\varepsilon \too 0} g_\varepsilon (v_\varepsilon) = g(v)$. First we consider the particular case of $v = w u_0$ with $w \in \hil^1(S )$. In this situation we take the sequence $(v_\varepsilon)_\varepsilon$ with, for each $\varepsilon > 0$, $v_\varepsilon := w u_0$. We have $v_\varepsilon\rightarrow v$ in $\LL^2(S \times Q)$ and we see that
\begin{eqnarray*}
& & \lim_{\varepsilon \too 0}
\int_{S \times Q}  (1/\beta^2_\varepsilon) 
\left|   \tilde{\partial}_s (w u_0) - w \la \nabla_y u_0, Ry \ra \, (\tau+\alpha')(s) \right|^2  {\mathrm d}s{\mathrm d}y \\
& = &
\int_{S} \left\{\left| \left(- i \partial_s - \la {\bf A}(r(s)), T(s) \ra  \right) w \right|^2 + C(Q) (\tau+\alpha')^2(s) |w|^2 \right\} {\mathrm d}s
\end{eqnarray*}
and
\begin{eqnarray*}
& &
\lim_{\varepsilon \too 0} \left[\int_{S \times Q} \frac{-k^2(s)}{4 \beta_\varepsilon^2} |wu_0|^2  {\mathrm d}s{\mathrm d}y
+  \int_{S \times Q} c |wu_0|^2 {\mathrm d}s\, {\mathrm d}y \right]\\
& = &
 - \int_{S } \frac{k^2(s)}{4}|w|^2 {\mathrm d}s+ c \int_{S } |w|^2 {\mathrm d}s.
\end{eqnarray*}

Recalling the definitions of $\tilde{A}_2^\varepsilon$,
$\tilde{A}_3^\varepsilon$ and  condition (\ref{nullpotentialcurve}), we obtain
\begin{eqnarray*}
& &
\lim_{\varepsilon \too 0}
\frac{1}{\varepsilon^2} \int_{S \times Q}\left(
\left|-i w \partial_{y_2} u_0 - \tilde{A}_2^\varepsilon w u_0 \right|^2 +
\left|-i w \partial_{y_3} u_0 - \tilde{A}_3^{\varepsilon} w u_0 \right|^2
-
\lambda_0 |wu_0|^2 \right) {\mathrm d}s\, {\mathrm d}y \\
& = &
\lim_{\varepsilon \too 0}
\frac{1}{\varepsilon^2} \int_{S \times Q}
\left[ (\tilde{A}_2^\varepsilon)^2 + (\tilde{A}_3^\varepsilon)^2\right] |w|^2 |u_0|^2  {\mathrm d}s{\mathrm d}y  \\
& = &
\int_{S }\left[
\left( \la N(s), {\bf A}(r(s))\ra \right)^2 + \left(\la B(s), {\bf A}(r(s))\ra \right)^2\right] |w|^2 \, {\mathrm d}s= 0.
\end{eqnarray*}
Thus,
$\lim_{\var \too 0}g_\var(v_\var)=g(wu_0)$.

Now, consider the case  $v \in \LL^2(S \times Q) \backslash \{w u_0: w \in \hil^1(S)\}$. By definition $g(v) = \infty$. Let $v_\varepsilon$ be a sequence so that $v_\varepsilon \rightarrow v$ in $\LL^2(S \times Q)$.  In this case, $\displaystyle \lim_{\varepsilon\rightarrow0}g_\varepsilon(v_\varepsilon)=\infty$.  In fact, if we suppose that $\displaystyle \lim_{\varepsilon \rightarrow0} g_\varepsilon(v_\varepsilon) < \infty$, by Lemmas~\ref{lema01teorema} and ~\ref{lema02teorema} we should have $v = w u_0$, with $ w \in \hil^1(S)$, which does not occur. Therefore, $\displaystyle \lim_{\varepsilon\rightarrow0} g_\varepsilon(v_\varepsilon) = \infty = g(v)$.

We have shown above that the sequence of quadratic forms $g_\var$ $\Gamma$-converges to~$g$ in the strong sense.
To conclude the weak $\Gamma$-convergence we need only to show the following:
if $v_\var \rightharpoonup v $ in $\LL^2(S\times Q)$, then
$\liminf_{\var \to 0} g_\var (v_\var) \geq g(v)$. 
If $\liminf_{\var \to 0} g_\var(v_\var) = \infty$, there is nothing to prove.
If $\liminf_{\var \to 0} g_\var(v_\var) < \infty$, we can suppose that
\[
\liminf_{\var \to 0} g_\var(v_\var) = \lim_{\var \to 0} g_\var(v_\var) <  \infty.
\]
Lemma \ref{lema01teorema} ensures that $(v_\var)_\var$ is bounded in ${\cal H}_0^1(S \times Q)$. By Rellich-Kondrachov Theorem, the space ${\cal H}_0^1(S \times Q)$ is compactly embedded in $\LL^2(S \times Q)$
and so there is a subsequence of $(v_\var)_\var$, also denoted by $(v_\var)_\var$, so that
$v_\var \to v$. 
Now, the proof follows the same steps of strong $\Gamma$-convergence.

The sequence
$g_\var$ $\Gamma$-converges to~$g$ in the strong and weak sense, and so the condition~${\it a)}$ of Proposition~\ref{GammaNorm} in Section~\ref{secIntro} is satisfied.  Since $S \times Q$ is bounded, the operator~$G$ have compact resolvent in $\LL^2(S \times Q)$ and so  item~${\it b)}$ holds true as well.  Again, by the Rellich-Kondrachov Theorem, but now combined with  Lemma~\ref{lema01teorema}, ensure the validity of  item~${\it c)}$.  By applying Proposition~\ref{GammaNorm},  we conclude the proof of Theorem~\ref{teoremaprincipal}.

\begin{Remark}
As usual in the context of (Dirichlet) reduction of dimension, we subtract the diverging coefficient in~\eqref{resultintroductionorm}, but we would like to point that another renormalization,  a ``magnetic renormalization'' by subtracting
\[
\int_{S \times Q} \frac{\lambda_\var(s)}{\varepsilon^2}|v|^2  {\mathrm d}s{\mathrm d}y
\]
from $\tilde{g}_\varepsilon^{\alpha, c}(v)$ would be more natural; here, for each $s \in S$, $\lambda_\varepsilon(s)$ is the first eigenvalue of 
\[
T_\varepsilon^s u  : =  \left[\left(-i\partial_{y_2}- \tilde{A}_2^\varepsilon(s, y)\right)^2 +
\left(-i\partial_{y_3}- \tilde{A}_3^\varepsilon(s, y)\right)^2 \right] u, 
\]
with Dirichlet boundary condition. It it possible to show that as $\varepsilon \rightarrow 0$, $\lambda_\varepsilon (s) \rightarrow  \lambda_0$ uniformly in~$S$, and, under some additional technical hypotheses, the same effective operator~$G_0$ is reached in the limit $\var \to 0$.
\end{Remark}

\subsubsection*{Acknowledgments} We thank an anonymous referee for valuable suggestions. RB was supported by CAPES (Brazil). CRdeO thanks the partial support by CNPq (Brazil).

\

\end{document}